\def\C{{\mathbb C}}
\def\R{{\mathbb R}}
\def\N{{\mathbb N}}
\def\H{{\mathcal H}}
\def\D{{\mathcal D}}
\def\a{\mathfrak a}
\def\F{\mathcal F}
\def\le{\leqslant}
\def\ge{\geqslant}
\newcommand{\eps}{\varepsilon}
\DeclareMathOperator*{\essinf}{ess\,inf}
\DeclareMathOperator{\supp}{supp}
\theoremstyle{plain}
\newtheorem{theorem}{Theorem}[section]
\newtheorem{lemma}[theorem]{Lemma}
\newtheorem{proposition}[theorem]{Proposition}
\newtheorem{assumption}[theorem]{Assumption}
\theoremstyle{definition}
\newtheorem{remark}[theorem]{Remark}
\newtheorem*{remark*}{Remark}
\numberwithin{equation}{section}
\begin{document}

\title[Effective Dynamics in the High Field Limit]
{Effective Dynamics of Translationally Invariant Magnetic Schr\"odinger Equations in the High Field Limit}

\author[G. Nenciu]{Gheorghe Nenciu}

\address[G.~Nenciu]
{Institute of Mathematics ``Simion Stoilow'' of the Romanian Academy, 21, Calea Grivi\c{t}ei, 010702-Bucharest, Sector 1, Romania}
\email{gheorghe.nenciu@imar.ro}

\author[E. Richman]{Evelyn Richman}

\address[E.~Richman]
{Department of Mathematics and Computer Science, University of Puget Sound, 1500 N. Warner Street,
Tacoma, WA 98416-1088, USA}
\email{evelynrichman@pugetsound.edu}

\author[C. Sparber]{Christof Sparber$^\ast$}

\address[C.~Sparber]
{Department of Mathematics, Statistics, and Computer Science, M/C 249, University of Illinois at Chicago, 851 S. Morgan Street, Chicago, IL 60607, USA}
\email{sparber@math.uic.edu$^\ast$}

\begin{abstract}
We study the large field limit in Schr\"odinger equations with magnetic vector potentials describing 
translationally invariant $B$-fields with respect to the $z$-axis. 
In a first step, using regular perturbation theory, we derive an approximate description of the solution, provided 
the initial data is compactly supported in the Fourier-variable dual to $z\in \R$.  
The effective dynamics is thereby seen to produce high-frequency oscillations and large magnetic drifts.  
In a second step we show, by using the theory of almost invariant subspaces, that this asymptotic description is stable under polynomially bounded
perturbations that vanish in the vicinity of the origin. 
\end{abstract}

\date{\today}

\subjclass[2020]{81Q15}
\keywords{Schr\"odinger equation, magnetic confinement, perturbation theory, almost invariant subspaces}


\maketitle


\section{Introduction}\label{sec:intro}

In this work we study a class of linear Schr\"odinger equations which describe the behavior of charged, spinless 
particles under the influence of {\it translationally invariant magnetic fields} $\mathcal B(x)=\nabla \times \mathcal A(x)$ in $\R^3$. More specifically, 
we denote the spatial variables by $(x_1,x_2,z)\in \R^3$, with $x\equiv (x_1,x_2)$, and consider magnetic vector potentials of the form 
\begin{equation*}\label{A}
\mathcal A (x)= \big(0, 0,  A(x)\big).
\end{equation*}
Note that this implies that $\mathcal A(x)$ is automatically divergence free, i.e. $\nabla \cdot \mathcal A(x)=0$. 
Prototypical examples include the field around an infinite current-carrying wire,
\[
\mathcal A = \big(0, 0, b\ln(|x|)\big),\quad b\in \R,
\]
and azimuthal fields of constant magnitude,
\begin{equation}\label{Aex}
 \mathcal A = \big(0, 0, b |x|\big),\quad b \in \R.
\end{equation}
The dynamics of quantum particles in such translationally invariant fields is governed by 
\begin{equation}\label{IVP}
	i\partial_t \Psi = \mathcal H \Psi , \quad \Psi_{\mid t=0} = \Psi _0\in L^2(\R^3),
\end{equation}
with (purely) magnetic Hamiltonian
\[
	\H = (-i\nabla + \mathcal A(x))^2.
\]
As previously observed in \cite{yafaev1, yafaev2}, the operator $\mathcal H$ can be treated as a family of two-dimensional operators under the partial Fourier transform 
$\mathcal F:L^2(\R)\to L^2(\R)$, given by
\[
	\F\varphi (x,p) \equiv \widehat\varphi (x,p) = \frac{1}{\sqrt{2\pi}}\int_\R e^{-izp}\varphi (x,z)\, dz, \quad \varphi \in L^2(\R^3).
\]
The translation invariance of the magnetic vector potential then implies that $\mathcal H$ is unitarily equivalent to the direct integral in $L^2(\R_p;L^2(\R^2_x))$ of the family of fiber Hamiltonians $\mathcal H(p)$, i.e.
\[
\mathcal F \mathcal H \mathcal F^{-1} = \int_\R^\oplus \mathcal H(p) \, dp.
\]
Within each fiber, the dynamics is then governed by 
\begin{equation}\label{IVPf}
	i\partial_t \widehat \Psi  =  \mathcal H(p) \widehat \Psi , \quad \widehat \Psi _{\mid t=0} = \widehat \Psi _0(\cdot, p).
\end{equation}
where, for fixed $p\in \R$, $\mathcal H(p)$ is an operator acting on $L^2(\R^2)$ only. We note that the hereby obtained fiber Hamiltonian
\[
\mathcal H(p) = -\Delta_x + (p+A(x))^2
\]
is symmetric and positive on $C_0^\infty(\R^2)$, where throughout this paper we shall always impose the condition $A \in L^\infty_{\rm loc}(\R^2)$. A classical 
result in the mathematical theory of Schr\"odinger operators (see, e.g., \cite{MaSh}) then implies that $\mathcal H(p)$ is essentially self-adjoint on $L^2(\R^2)$ and we 
shall denote its unique self-adjoint extension by the same letter. By Stone's theorem, $\mathcal H(p)$ is seen to be the generator of a group of unitary operators
$$
\mathcal S_t(p)=e^{-it \mathcal H(p)}: \ L^2(\R^2)\to L^2(\R^2).
$$
which yields the existence of a unique solution to \eqref{IVPf}. In the following, we will be interested in an effective description of this type of quantum dynamics 
in the {\it high magnetic field limit}, i.e., the regime in which the field strength $|\mathcal B|\to +\infty$. 


\subsection{Asymptotic regime and rescaling} 


We first consider the case of 
{\it homogenous} vector potentials, which can be seen as a generalization of \eqref{Aex}. More specifically, we consider
\begin{equation}\label{homA}
	\mathcal A^\eps(x) = \frac{1}{\eps^{\alpha +1}} \big(0,0,|x|^\alpha \Theta(\vartheta)\big), \quad \text{with $\alpha > 0$},
\end{equation}
where $|x|\ge 0$ and $\vartheta \in [0, 2\pi)$ denote polar coordinates in $\R^2$. The function $\Theta\in C_{\rm per}([0, 2\pi); \R)$ describes 
a possible azimuthal dependence of the magnetic field. Throughout this work, we shall impose: 

\begin{assumption}\label{assump}
The function $\Theta\in C_{\rm per}([0, 2\pi); \R)$ satisfies
\begin{equation*}
0 < c_1 \le \Theta(\vartheta) \le c_2 < \infty, \quad \forall \vartheta\in [0, 2\pi).
\end{equation*}
\end{assumption}

In addition, the small dimensionless parameter $0<\eps \ll 1$ describes the inverse of the magnetic field strength $|\mathcal B^\eps|$, 
and we are consequently interested in deriving the effective dynamics of solutions to \eqref{IVPf} and \eqref{homA} as $\eps \to 0_+$. The power $\alpha+1$ appearing 
in \eqref{homA} might seem ad hoc at first glance, but it is chosen in a way that will allow us to expand various quantities in integer powers of the 
perturbation parameter $(\eps p)$, see below. In principle, other powers can be considered as well, but the 
formulas for the expansion become much more cumbersome. 
In view of the fact that $\mathcal A^\eps$ can be rewritten as
\[
\mathcal A^\eps(x) = \frac{1}{\eps} \Big(0,0,\Big|\frac{x}{\eps}\Big|^\alpha \Theta(\vartheta)\Big) = \frac{1}{\eps}\Big(0,0, A\Big(\frac{x}{\eps}\Big)\Big)
\]
it is natural to introduce rescaled spatial variables via
\begin{equation*}\label{res}
	x' = \frac{x}{\eps} , \quad z' = z.
\end{equation*}
The new unknown is then given by 
\[
\psi^\eps(t, x', z') = \eps \Psi(t, \eps  x', z'),
\]
where the prefactor ensures that the $L^2$-norm remains invariant. From now on, 
we shall also assume that the initial data $\Psi_0$ is $\eps$-{\it independent} in the rescaled variables $(x',z')$ and normalized s.t. $\| \psi_0 \|_{ L^2} = 1$.  
\begin{remark} In terms of the original variables, this means that 
\[
\Psi_0 (x,z) = \frac{1}{\eps} \psi_0 \left(\frac{x}{\eps}, z \right). 
\]
In other words, $\Psi_0$ is concentrated on the scale $\eps $ w.r.t. to the $x$-directions, an assumption consistent with the asymptotic regime considered.
\end{remark} 
Rescaling the initial value problem \eqref{IVP} accordingly, we obtain (after dropping all the primes $'$, for simplicity)
\begin{equation}\label{resIVP}
i\eps^2  \partial_t \psi^\eps = \mathcal H^\eps \psi^\eps, \quad \psi^\eps_{\mid t=0} = \psi _0(x,z),
\end{equation}
where the rescaled Hamiltonian reads
\begin{equation}\label{resH}
\mathcal H^\eps =  - \Delta_x  - \eps^2 \partial_z^2 + |x|^{2\alpha}\Theta(\vartheta)^2 - 2i\eps|x|^\alpha \Theta(\vartheta)\partial_z. 
\end{equation}
We seek an asymptotic description as $\eps \to 0_+$ of 
\[
\psi^\eps(t,x,z)=e^{-it\H^\eps/\eps^2}\psi_0(x,z),
\]
solution to \eqref{resIVP}. To this end, we again have the fiber decomposition
\[
\mathcal F \mathcal H^\eps \mathcal F^{-1} = \int_\R^\oplus \mathcal H^\eps(p) \, dp.
\]
where $\mathcal H^\eps(p)$ can be written as a perturbed Hamiltonian of the form
\begin{equation}\label{resFH}
\mathcal H^\eps(p) =  H_0 + \eps  p \widehat{V}^{\eps,p}(x).
\end{equation}
Here, and in the following, $H_0$ denotes the $\eps$-independent part of the rescaled Hamiltonian, i.e.,
\begin{equation}\label{H}
H_0 := -\Delta_x + |x|^{2\alpha}\Theta(\vartheta)^2,\quad x\in \R^2,
\end{equation}
and, for each fixed $p\in \R$, we have an {\it effective potential} given by
\begin{equation}\label{effpot}
\widehat{V}^{\eps,p}(x) := \eps p + 2|x|^\alpha \Theta(\vartheta).
\end{equation}
Note that the unperturbed operator $H_0$ describes the {\it magnetic confinement} of quantum particles in the $(x_1,x_2)$-plane, since for all $\alpha>0$, the 
potential
\[
V_0(x)=|x|^{2\alpha}\Theta(\vartheta)^2 \to +\infty \quad \text{as $|x|\to \infty$,}
\] 
in view of Assumption \ref{assump}. In particular, the {\it spectrum} $\sigma(H_0)$ is purely discrete, see Lemma \ref{specInfo}. 


\subsection{Effective dynamics} 


Assume, for simplicity, that $\lambda \in \sigma(H_0)$ is a given, 
non-degenerate eigenvalue of $H_0$ with associated eigenfunction $\chi \in L^2(\R^2)$. Then, for each fixed $p\in \R$ and $\eps\in (0,1]$ sufficiently small, we may, 
in view of \eqref{resFH}, regard $(\eps  p) \ll 1$ as an effective perturbation parameter within the fiber decomposition of $\mathcal H^\eps$. 
Using classical techniques from analytic perturbation theory allows us to derive (for $\varepsilon$ small enough) a convergent series for the 
{\it perturbed eigenvalues} $\lambda^\eps$ in the form
\[
\lambda^\eps = \lambda + \eps  p\lambda_1 + (\eps  p)^2\lambda_2 + \dots 
\]
The coefficients $\lambda_j\in \R$ are thereby computed by an iterative procedure starting from the unperturbed 
eigenvalue $\lambda\in \sigma(H_0)$, see 
Section \ref{sec:RalCoef}. for more details. 
In turn, this yields an effective description of $\psi^\eps(t)$ as $\eps\to 0_+$: 

Indeed, let $p_0>0$ denote some (large) cut-off parameter in the momentum variable $p\in \R$ dual to $z$, and assume 
that $\psi_0$ is of the form
\[
\psi_0(x,z) = {a}(z) \chi(x),
\]
where the modulating amplitude $a\in L^2(\R)$ is such that $\widehat a(p)= 0$ for $|p|>p_0$. We shall prove (cf. Theorem \ref{finRes}) that in this case the solution to \eqref{resIVP} satisfies
\begin{equation}\label{approxsol}
\psi^\eps(t, x,z) =  \phi^\eps(t,x, z)  +\mathcal O(\eps (1+|t|)),
\end{equation}
where the $\mathcal O$-notation should be understood w.r.t. the $L^2$-norm, and
\begin{equation}\label{phiint}
\phi^\eps(t,x, z)= e^{-it \lambda /\eps^2}\chi(x)e^{it\lambda_{2}\Delta_z}a\Big(z- \tfrac{ t\lambda_{1}}{\eps}\Big)
\end{equation}
We observe from \eqref{approxsol} and \eqref{phiint} that the solution $\psi^\eps$ is {\it highly oscillatory in time}, with a constant but singular phase $\propto \eps^{-2}$. 
More interestingly, along the $z$-axis the solution exhibits the dispersive behavior of 
a free particle with {\it effective mass} $M= \lambda_{2}^{-1}$, while also being subject to a {\it strong drift} with 
velocity $v^\eps= \lambda_{1}/\eps$. We shall show in Section 2.2. that $\lambda_1>0$ and given by $$\lambda_{1} = 2\, \big\langle \chi, |\cdot|^\alpha \Theta \chi \big\rangle_{L^2(\R^2)},$$
while the formula for $\lambda_2$ is slightly more involved. 
Such strong drifts along $z$ can can also be observed in numerical simulations of 
the corresponding classical particle dynamics, see \cite{geniet}. 

The compact support condition on $\widehat{a}$ is needed in our approach to justify the use of $\eps =\eps p$ 
as a small parameter. To this end, one should note that the solution to \eqref{resIVP} is to be understood via
\begin{equation*}
\psi^\eps (t, x,z)= \mathcal F^{-1} \big(\mathcal S_t(p) \widehat \psi _0(x, p)\big)(z), 
\end{equation*}
where $\mathcal H^\eps(p)$ is the fiber Hamiltonian defined in \eqref{resFH} and $\mathcal S_t(p)$ the associated Schr\"odinger group acting on $L^2(\R^2)$.
In particular, if $\widehat{\psi }_0$ is compactly supported in $p\in \R$, 
then 
\[
\widehat{\psi^\eps }(t,x,p) = \widehat{\psi^\eps }(t,x,p)\mathbbm{1}_{\{|p|\le p_0\}} \quad \text{for all $t \in \R$,} 
\]
since the characteristic function $\mathbbm{1}_{\{|p|\le p_0\}}$ 
clearly commutes with $\mathcal S_t(p)$.

\subsection{Comparison with existing results} When comparing the effective dynamics \eqref{approxsol} to earlier results in the literature, 
we note that in the case of a {\it purely radial} vector potential $$\mathcal A(x)=(0,0,A(|x|)),$$ Yafaev gave a detailed investigation of 
the spectral properties of $\mathcal H$ and the associated long-time behavior of solutions in \cite{yafaev1, yafaev2}. The fact that $A(|x|)$ is purely radial thereby 
allows for a description of the spectrum $\sigma (\mathcal H(p))$ in terms of spectral band-functions. In turn this allows for a representation of the 
long-time behavior of solutions as $t\to \pm \infty$, which is governed by group velocities that play the same role as $v^\eps= \lambda_{1}/\eps$ does in formula \eqref{phiint}. 
We also note that an analogous study for the case of unitary $B$-fields (induced by vector potentials \eqref{Aex} in general spatial dimensions) has later been done in \cite{geniet}. 

In contrast to all of these works we can allow for an azimuthal dependence of the vector potential described by the function $\Theta$. While we do not study the 
asymptotic regime as $t\to \pm \infty$, we instead consider the time-evolution on large time-scales up to $t\sim \mathcal O(\eps^{-1})$ in the 
large field limit as $\eps \to 0_+$. Our results can therefore be seen 
as complementary to \cite{yafaev1, yafaev2}. 


\subsection{Stability under more general perturbations} 


The effective dynamics \eqref{approxsol} is obtained for the particular class of vector potentials given in \eqref{homA}. It is a natural question whether 
a similar effective description holds true for more general, translation invariant magnetic fields. To answer this question, the main observation is that any $\psi^\eps$ which is 
(approximately) given by \eqref{approxsol} remains strongly localized near the origin of the $(x_1, x_2)$-plane. Thus we expect that only the behavior of $\mathcal A^\eps$ in a small 
neighborhood near this origin plays a significant role. 

In the second part of this work, we shall rigorously show that \eqref{approxsol} indeed remains valid for 
a much more general class of vector potentials, given by
\begin{equation}\label{Aheur}
\mathcal A_{\mathfrak a}^\eps(x) =   \frac{1}{\eps^{\alpha +1}}\big(0,0,(|x|^\alpha \Theta(\vartheta)+\mathfrak a(x))\big).
\end{equation}
Here, $\mathfrak a\in L^\infty_{\rm loc}(\R^2)$ is assumed to be vanishing in a small neighborhood around the origin of the $(x_1,x_2)$-plane and 
polynomially bounded at infinity (see Section \ref{sec:sing} for a precise formulation). In particular, $\mathfrak a$ is {\it not necessarily homogenous }and 
can become arbitrarily large as $|x|\to \infty$. The class of vector potentials \eqref{Aheur} is therefore by no means a small perturbation of the previous case \eqref{homA}.

Nevertheless, the heuristic picture described above can be made mathematically rigorous by using the fact that all eigenfunctions $\chi$ of the unperturbed Hamiltonian $H_0$, defined in \eqref{H}, 
admit a strong, i.e. exponential, decay. This implies that 
the contribution of $\mathcal A_\a^\eps(x)$ for large $|x|$ are strongly suppressed in our perturbative analysis. Indeed, if $P_\lambda$ denotes the 
spectral projection onto a simple eigenvalue $\lambda \in \sigma(H_0)$ and $\H_\a^\eps$ the magnetic Hamiltonian with an additional perturbation given by $\a$, 
we shall prove that (cf. Theorem 3.3):
\begin{equation}\label{newest}
\Big \| \Big(e^{-it \H_\a^\eps/\eps^2} - e^{-it \H^\eps /\eps^2} \Big)P_\lambda \Big \|= \mathcal O\big (\eps  (1+|t|)\big ).
\end{equation}
In particular, this implies that for initial data $$\psi_0 (x,z) = P_\lambda \psi_0 (x, z)= a(x) \chi(z),$$ 
with compactly supported amplitude $\widehat a$, we have, by triangle inequality:
\begin{align*}
 \left\|e^{-it \H_\a^\eps/\eps^2}\psi_0 - \phi^\eps(t) \right\|_{L^2(\R^3)}  \le & \, \left\| \Big (e^{-it \H_\a^\eps/\eps^2} - e^{-it \H^\eps/\eps^2}\Big)P_\lambda   \psi_0\right\|_{L^2(\R^3)} \\
& \, + \left\|e^{-it \H^\eps/\eps^2} \psi_0 - \phi^\eps(t) \right\|_{L^2(\R^3)}  \lesssim \eps  (1+|t|).
\end{align*}
In view of \eqref{approxsol} and \eqref{newest}, both terms in the middle are seen to be of the same order, resulting in the last inequality. Thus, the effective dynamics $\phi^\eps$ given by 
\eqref{phiint} is also an approximate solution in the case with an additional inhomogeneous perturbation $\mathfrak a$.


Let us emphasize, however, that a rigorous mathematical justification of this approach is not straightforward. In particular it {\it cannot} be obtained within the framework of 
regular perturbation theory, since $\mathfrak a$ is allowed to grow polynomially as $|x|\to \infty$ and thus is not an $H_0$-bounded perturbation. 
Indeed, the mere assumption of a polynomial bound on $\a$, combined with the fact that $\a$ vanishes in a vicinity of the origin, 
even allows for $\sigma(\mathcal H_\a^\eps)$
to be {\it purely continuous}, see Remark \ref{conspec}. 
To overcome this obstacle one has to use the more sophisticated framework of asymptotic perturbation theory, cf. \cite{Hu, K, reed} for a general overview. More precisely, we 
shall employ the theory of almost invariant subspaces developed in \cite{nenciu}. The latter relies strongly on exponential decay bounds of the 
eigenfunctions of $H_0$. While these bounds can, 
in principle, be obtained from the general theory of Agmon-Combes-Thomas (see e.g. \cite{Hi}), 
for the readers convenience we shall give an elementary proof of this fact in the Appendix.

\medskip

This paper is now {\it organized} as follows: In Section \ref{sec:regular} we shall give a rigorous proof of the effective dynamics described in \eqref{approxsol}. 
We shall also show how to generalize this result to arbitrary initial data in $L^2(\R^3)$, if one is willing to give up on an explicit $\eps$-dependent approximation error. 
The case of polynomially bounded perturbations of $\mathcal A^\eps$ in the form \eqref{Aheur} is then studied in Section \ref{sec:sing}. In it we shall first 
recall some general aspects of the theory of almost invariant subspaces and then apply these results to our setting. 

\smallskip

{\it Notation}: Throughout this work, we shall write $a\lesssim b$ if there exists a constant $C>0$, independent of $\varepsilon\in (0,1]$ and $t\in \R$, such that $a\le C b$.


\section{The homogeneous case: regular perturbation theory}\label{sec:regular}


Assume that $\mathcal A^\eps$ is given by \eqref{homA}, and recall that, after applying the partial Fourier-transformation to \eqref{resIVP}, 
we are lead to consider the following perturbed initial value problem:
\begin{equation}\label{FinalPDE}
	i\eps^2 \partial_t\widehat{\psi}^\eps = \big( H_0 + \eps  p \widehat{V}^{\eps,p}(x)\big) \widehat{\psi}^\eps, \quad \widehat{\psi}^\eps_{\mid t=0} = \widehat{\psi}_0(\cdot, p),
\end{equation}
where $H_0$ is a confining Hamiltonian of the form
\[
H_0=-\Delta_x + |x|^{2\alpha}\Theta(\vartheta)^2,\quad \alpha >0.
\]

Classical results from spectral theory (see, e.g., \cite{MaSh}), describe the main properties of the unperturbed operator $H_0$:

\begin{lemma}\label{specInfo} Assume that $\Theta$ satisfies Assumption \eqref{assump}. Then, for all $\alpha>0$, the operator $H_0$ is essentially self-adjoint on 
$C_0^\infty(\R^2)$. Moreover, we have:
\begin{itemize}
\item[(i)] $H_0\ge 0$ and $(H_0+1)^{-1}$ is compact.

\item[(ii)] The spectrum $\sigma(H_0)$ is an increasing sequence $(\lambda_n)_{n\in \N} \subset \R_+$ of positive eigenvalues with finite multiplicity
and $\lambda_n\to +\infty$ as $n\to \infty$. In particular, 
\[
\sigma(H_0)=\sigma_{\rm disc}(H_0)\ \text{and $\sigma_{\rm ess}(H_0)=\emptyset$.}
\]

\item[(iii)] The associated eigenfunctions $(\chi_n)_{n \in \N}$, counted with multiplicity, form an orthonormal basis of $L^2(\R^2)$.
\end{itemize}
\end{lemma}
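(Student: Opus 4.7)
Since $\Theta$ is continuous and bounded below by $c_1>0$, the potential $V_0(x)=|x|^{2\alpha}\Theta(\vartheta)^2$ is continuous on $\R^2$, non-negative, and satisfies the two-sided bound
\[
c_1^2\,|x|^{2\alpha}\le V_0(x)\le c_2^2\,|x|^{2\alpha},
\]
so in particular $V_0\in L^\infty_{\rm loc}(\R^2)$ and $V_0(x)\to+\infty$ as $|x|\to\infty$. My plan is to deduce all three claims from these properties together with standard results from the spectral theory of Schr\"odinger operators.

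\textbf{Essential self-adjointness and non-negativity.} For the operator $H=-\Delta_x+V_0$ with $0\le V_0\in L^\infty_{\rm loc}(\R^2)$, essential self-adjointness on $C_0^\infty(\R^2)$ is a classical theorem (Kato / Faris--Lavine, see \cite{MaSh} for an exposition in the confining setting). Non-negativity follows immediately from the quadratic form
\[
\langle H\varphi,\varphi\rangle=\int_{\R^2}|\nabla\varphi|^2\,dx+\int_{\R^2}V_0|\varphi|^2\,dx\ge 0,\qquad \varphi\in C_0^\infty(\R^2).
\]

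\textbf{Compactness of the resolvent.} This is the main technical point. I would work with the form domain $\mathcal Q(H)$, equipped with the norm $\|\varphi\|_{\mathcal Q}^2=\|\nabla\varphi\|^2+\int V_0|\varphi|^2+\|\varphi\|^2$, and show that the embedding $\mathcal Q(H)\hookrightarrow L^2(\R^2)$ is compact. Take a bounded sequence $(\varphi_n)\subset\mathcal Q(H)$; by the Rellich--Kondrachov theorem the restrictions $(\varphi_n|_{B_R})$ are precompact in $L^2(B_R)$ for every ball $B_R$. For the tails I would use the lower bound $V_0\ge c_1^2|x|^{2\alpha}$ to estimate
\[
\int_{|x|>R}|\varphi_n|^2\,dx\le \frac{1}{c_1^2 R^{2\alpha}}\int V_0|\varphi_n|^2\,dx\le \frac{C}{R^{2\alpha}},
\]
uniformly in $n$, and then extract a diagonal subsequence that converges in $L^2(\R^2)$. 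Compactness of the embedding is equivalent to compactness of $(H+1)^{-1}$, proving part (a). The alternative route via \cite{MaSh} (magnetic Schr\"odinger operators with confining potentials) would of course give the same conclusion directly.

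\textbf{Spectral consequences.} Parts (b) and (c) are now automatic consequences of the spectral theorem for self-adjoint operators with compact resolvent: the spectrum is purely discrete, consists of eigenvalues of finite multiplicity with the only possible accumulation point at $+\infty$, and an orthonormal basis of eigenfunctions exists. The only remaining point is the strict positivity of every $\lambda_n$. If $H\chi=0$ for some $\chi\in D(H)$, then the quadratic form identity forces $\nabla\chi=0$ and $V_0|\chi|^2=0$ a.e.; since $V_0>0$ almost everywhere on $\R^2$, this gives $\chi=0$, so $0\notin\sigma(H)$ and hence $\lambda_n>0$ for all $n$. The only delicate step in the whole argument is the uniform tail estimate used in the compactness proof, but the confining lower bound on $V_0$ provided by Assumption \ref{assump} makes it routine.
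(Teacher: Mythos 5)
Your proof is correct. Note, however, that the paper does not prove Lemma \ref{specInfo} at all: it simply labels the statements as ``classical results from spectral theory'' and points to \cite{MaSh}. So your contribution is a self-contained verification of what the authors delegate to the literature. The individual steps are all sound: essential self-adjointness for $0\le V_0\in L^\infty_{\rm loc}$ is indeed Kato's theorem; the compactness argument via the form embedding $\mathcal Q(H)\hookrightarrow L^2(\R^2)$, combining Rellich--Kondrachov on balls with the uniform tail bound $\int_{|x|>R}|\varphi_n|^2\le c_1^{-2}R^{-2\alpha}\|\varphi_n\|_{\mathcal Q}^2$ coming from $V_0\ge c_1^2|x|^{2\alpha}$, is the standard and most economical route, and the passage from compactness of the embedding to compactness of $(H+1)^{-1}$ via $(H+1)^{-1/2}$ is legitimate; parts (b) and (c) then follow from the spectral theorem for operators with compact resolvent, and your exclusion of the eigenvalue $0$ is fine (one could also just note that $\nabla\chi=0$ forces $\chi$ to be constant, hence zero in $L^2(\R^2)$). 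What the citation to \cite{MaSh} buys the authors is generality --- discreteness criteria for potentials that need not be confining in the pointwise sense --- which is unnecessary here, since Assumption \ref{assump} gives the clean two-sided bound you exploit.
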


In order to ensure that $\widehat{V}^{\eps,p}(x) = \eps p + 2|x|^\alpha \Theta(\vartheta)$ can indeed be considered as a regular perturbation of $H_0$, we have the following simple lemma:
\begin{lemma}\label{lem:Hbound}
For all $u(\cdot, p) \in \D(H_0) \subset L^2(\R^2),$
	\begin{equation*}\begin{split}\label{Vbound}
		\|\widehat{V}^{\eps,p} u(\cdot, p) \|_{L^2(\R^2)} \le \big(\sqrt{2} +\eps |p|\big) \|u(\cdot, p) \|_{L^2(\R^2)} + \sqrt{2} \|H_0u(\cdot, p) \|_{L^2(\R^2)}.
	\end{split}\end{equation*}
Thus, the potential $\widehat{V}^{\eps,p}$ is $H_0$-bounded.
\end{lemma}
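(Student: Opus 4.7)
My plan is to reduce the bound on $\widehat V^{\eps,p}$ to a bound on the single factor $2|x|^\alpha \Theta(\vartheta)$, which I will control directly in terms of the potential part $V_0(x) := |x|^{2\alpha}\Theta(\vartheta)^2$ of $H$.

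\textbf{Step 1 (Split off the constant part).} Since $\widehat V^{\eps,p}(x) = \eps^\beta p + 2|x|^\alpha \Theta(\vartheta)$, the triangle inequality gives
\begin{equation*}
\|\widehat V^{\eps,p} u\|_{L^2} \le \eps^\beta |p|\, \|u\|_{L^2} + \bigl\| 2|x|^\alpha \Theta(\vartheta)\, u \bigr\|_{L^2}.
\end{equation*}
It remains to bound the second term by $\sqrt 2(\|u\|+\|Hu\|)$.

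\textbf{Step 2 (Reduce to the quadratic form of $V_0$).} By Assumption \ref{assump}, $\Theta>0$, hence $|x|^\alpha\Theta(\vartheta) = \sqrt{V_0(x)}$ pointwise. Therefore
\begin{equation*}
\bigl\| 2|x|^\alpha \Theta(\vartheta)\, u \bigr\|_{L^2}^{2} = 4\int_{\R^2} V_0(x)\,|u(x)|^2\,dx = 4\,\langle V_0 u,u\rangle.
\end{equation*}

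\textbf{Step 3 (Control $V_0$ by $H$).} For $u\in C_0^\infty(\R^2)$ an integration by parts gives
\begin{equation*}
\langle V_0 u, u\rangle = \langle Hu,u\rangle - \|\nabla u\|_{L^2}^{2} \le \langle Hu,u\rangle \le \|Hu\|_{L^2}\,\|u\|_{L^2},
\end{equation*}
the last step by Cauchy--Schwarz. Since, by Lemma \ref{specInfo}, $H$ is essentially self-adjoint on $C_0^\infty(\R^2)$, this inequality extends by density to all $u\in\D(H)$. This is the only nontrivial step; everything else is algebraic.

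\textbf{Step 4 (Conclude via AM--GM).} From Steps 2--3,
\begin{equation*}
\bigl\| 2|x|^\alpha \Theta(\vartheta)\, u \bigr\|_{L^2}^{2} \le 4\,\|Hu\|_{L^2}\,\|u\|_{L^2} \le 2\bigl(\|Hu\|_{L^2}^{2} + \|u\|_{L^2}^{2}\bigr),
\end{equation*}
using $2ab \le a^2+b^2$. Taking square roots and using $\sqrt{a^2+b^2}\le a+b$ for $a,b\ge 0$ yields
\begin{equation*}
\bigl\| 2|x|^\alpha \Theta(\vartheta)\, u \bigr\|_{L^2} \le \sqrt 2\,\|u\|_{L^2} + \sqrt 2\,\|Hu\|_{L^2}.
\end{equation*}
Plugging this back into Step 1 gives exactly the stated bound, and in particular shows that $\widehat V^{\eps,p}$ is $H$-bounded with relative bound $\sqrt 2$ (which is all that analytic perturbation theory will require once combined with the small prefactor $\eps^\beta p$).
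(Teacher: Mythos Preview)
Your proof is correct and follows essentially the same route as the paper: split off the constant $\eps^\beta p$, use positivity of $-\Delta$ to bound $\langle V_0 u,u\rangle \le \langle Hu,u\rangle$, apply Cauchy--Schwarz and AM--GM, and extend from $C_0^\infty$ to $\mathcal D(H)$ by essential self-adjointness. The only differences are cosmetic (the paper writes $ab\le \tfrac12(a+b)^2$ directly rather than passing through $\sqrt{a^2+b^2}\le a+b$).
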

\begin{proof}
First, let $u\in C_0^\infty(\R^2_x)$ and denote $v(x)=|x|^{\alpha}\Theta(\vartheta)$.
Since $-\Delta$ is positive, i.e., $\langle u , (-\Delta) u \rangle_{L^2}  \ge 0$, we have, by Cauchy-Schwarz:
\begin{equation*}
		\|v u\|_{L^2}^2 = \langle u, v^2u\rangle_{L^2} \le \langle u,H_0u\rangle_{L^2} \le \|u\|_{L^2} \|H_0u\|_{L^2} 
		\le \frac{1}{2} \big(\|u\|_{L^2} + \|H_0u\|_{L^2}\big)^2.
\end{equation*}			
Hence by triangle inequality,	
\begin{equation*}
\|\widehat{V}^{\eps,p} u \|_{L^2} \le 	\eps  |p| \|u \|_{L^2} +2 \|v u\|_{L^2}  \le 	\big(\sqrt 2+\eps  |p|\big) \|u \|_{L^2} +\sqrt 2 \|H_0u \|_{L^2}.
\end{equation*}
Since $H_0$ is essentially self-adjoint on $C_0^\infty(\R^2)$, this estimate extends to $u \in \mathcal D(H_0)$.	
\end{proof}
Lemma \ref{lem:Hbound} implies that $\H^\eps(p)$  is an 
analytic family of operators in the sense of Kato, and hence we may apply the following result from regular perturbation theory, see \cite{K, reed}:

\begin{proposition}
Let $p\in \R$ be fixed and $\lambda \in \sigma(H_0)$ be an $m$-degenerate eigenvalue of $H_0$. 
Then, there exists an $\eps_0>0$ such that for $|\eps p| < \eps_0 |p|$, and $k=1, \dots, m$ there exist:
\begin{itemize}
\item[(i)] $\lambda_k(\eps p)$, not necessarily distinct, real functions;
\item[(ii)] $\chi_k(\cdot, \eps p) \in L^2(\R^2)$, normalized via $\| \chi_k(\cdot, \eps p)\|_{L^2}=1$;
\item[(iii)] one-dimensional orthogonal projection $P_k(\eps  p): L^2(\R^2)\to L^2(\R^2)$, with \[P_k(\eps p)P_\ell(\eps p) = \delta_{k, \ell} P_{\ell}(\eps p).\]
\end{itemize}
For $|\eps p| < \eps_0 |p|$, all of these have absolutely converging expansion in powers of $(\eps p)$, and they satisfy, for all $k =1, \dots,  m$: 
\[
\lim_{(\eps p)\to 0} \lambda_k(\eps p) = \lambda,
\]
as well as
\[
\H^\eps(p) \chi_k(\cdot, \eps  p) = \lambda_k(\eps  p)\chi_k(\cdot, \eps  p),
\]
and
\[
P_k(\eps p)\chi_k(\cdot, \eps p) = \chi_k(\cdot, \eps p).
\]
The functions $\{\lambda_k(\eps p)\}_{k=1}^m$ are the only eigenvalues of $\H^\eps(p)$ near $\lambda$. 
In particular, $$P_\lambda = \sum_{k=1}^m P_k(0)$$ yields the orthogonal projection onto the spectral subspace of $H_0$ associated to $\lambda$.

\end{proposition}

\begin{remark}
One observes from \eqref{effpot} that the term $\eps p$ within $\widehat V^{\eps, p}$ just shifts the energy by 
a constant factor. In particular, the eigenfunctions for $\mathcal H^\eps(p)$ are the same as for $\mathcal H^\eps(p)-(\eps p)^2$.
\end{remark}

Now, let $\lambda\in \sigma(H_0)$ be an $m$-degenerate eigenvalue of $H_0$. In what follows, we shall write the asymptotic expansions of the perturbed eigenvalues, as $(\eps p)\to 0$,
in the form
\begin{equation}\label{EigenAnalytic}
\lambda_k(\eps  p) = \lambda + \sum_{j=1}^\infty (\eps  p)^j\lambda_{k,j}, \quad k=1,\dots, m, 
\end{equation}
where $\lambda_{k,j}\in \R$. Analogously, we shall write for the associated spectral projections
\begin{equation}\label{Panalytic}
\begin{split}
P_k(\eps  p)= P_k(0) + \sum_{j=1}^\infty (\eps  p)^j Q_{k, j} ,
\end{split}
\end{equation}
and denote 
\[
\chi_k \equiv \chi_k(\cdot, 0).
\]

We can now state the first main result describing the effective dynamics of solutions to \eqref{resIVP}. 

\begin{theorem}\label{finRes} Let $\Theta$ satisfy Assumption \ref{assump} and let the initial data $\psi_0\in L^2(\R^3)$ be such that 
\[
\mathcal F \psi_0 (x,p)\equiv	\widehat{\psi}_0 (x,p)= \widehat{a}(p) \chi_k(x),
\]
where we assume that $\supp \widehat{a}\subset [-p_0, p_0]$, for some $p_0>0$.
Define an approximate solution $\phi^\eps$ to \eqref{resIVP} by
\begin{equation*}\label{phiDef}
	\phi_k^\eps(t, x, z) = e^{-it \lambda /\eps^2}\chi_k(x)e^{it\lambda_{k,2}\Delta_z}a\Big(z-\tfrac{t \lambda_{k,1}}{\eps}\Big).
\end{equation*}
Then, there exists an $\eps_0\in (0,1]$, such that for all $0<\eps<\eps_0$ and all $t\in \R$:
\begin{equation}\begin{split}\label{invFsoln2}
	\left\|e^{-it \H^\eps/\eps^2}\psi_0 - \phi_k^\eps(t) \right\|_{L^2(\R^3)} \lesssim \eps  (1+|t|).
\end{split}\end{equation}
\end{theorem}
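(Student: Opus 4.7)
The plan is to work fiberwise via the partial Fourier transform in $z$, exploiting the fact that multiplication by the characteristic function of $\Omega := \supp\widehat{a}$ commutes with the propagator $e^{-it\eps^{-2\beta}\mathcal H^\eps(p)}$. In Fourier space the exact solution reads $\widehat{a}(p)\,e^{-it\eps^{-2\beta}\mathcal H^\eps(p)}\chi_k(x)$, while the target $\phi^\eps$ has Fourier transform
\[
\widehat{a}(p)\, e^{-it\eps^{-2\beta}\lambda}\, e^{-it\eps^{-\beta}p\lambda_{k,1}}\, e^{-itp^2\lambda_{k,2}}\chi_k(x),
\]
as one sees by noting that $e^{it\lambda_{k,2}\Delta_z}$ becomes multiplication by $e^{-itp^2\lambda_{k,2}}$ and the translation by $t\eps^{-\beta}\lambda_{k,1}$ becomes multiplication by $e^{-it\eps^{-\beta}p\lambda_{k,1}}$. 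By Plancherel, it then suffices to control the $L^2(\R_p;L^2(\R^2_x))$-norm of the difference, and because $\widehat{a}$ has compact support we may restrict throughout to $p\in\Omega$ with $|p|\le p_0$.

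For such $p$ and $\eps$ small enough that $\eps^\beta p_0$ lies inside the disk of analyticity provided by the preceding proposition, I would split $\chi_k = P_k(\eps^\beta p)\chi_k + (I-P_k(\eps^\beta p))\chi_k$. Since $P_k(\eps^\beta p)(L^2)$ is the perturbed eigenspace, the first piece evolves as $e^{-it\eps^{-2\beta}\lambda_k(\eps^\beta p)}P_k(\eps^\beta p)\chi_k$. Using the analytic expansion \eqref{Panalytic} together with $P_k(0)\chi_k=\chi_k$ gives
\[
\|(I-P_k(\eps^\beta p))\chi_k\|_{L^2(\R^2_x)} = \|(P_k(0)-P_k(\eps^\beta p))\chi_k\|_{L^2(\R^2_x)} \leq C\eps^\beta |p|,
\]
uniformly in $p\in\Omega$. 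By unitarity of the Schr\"odinger group, this bound absorbs both the off-subspace remainder and the replacement of $P_k(\eps^\beta p)\chi_k$ by $\chi_k$, reducing the problem to estimating the scalar phase $e^{-it\eps^{-2\beta}\lambda_k(\eps^\beta p)}$.

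Writing $\lambda_k(\eps^\beta p) = \lambda + \eps^\beta p\,\lambda_{k,1} + (\eps^\beta p)^2\lambda_{k,2} + R_k(\eps^\beta p)$ from \eqref{EigenAnalytic}, the convergence of the Rayleigh-Schr\"odinger series on a fixed disk yields $|R_k(\eps^\beta p)|\leq C(\eps^\beta|p|)^3$. Applying the elementary bound $|e^{is}-e^{is'}|\leq|s-s'|$ then gives
\[
\bigl|e^{-it\eps^{-2\beta}\lambda_k(\eps^\beta p)} - e^{-it\eps^{-2\beta}\lambda}e^{-it\eps^{-\beta}p\lambda_{k,1}}e^{-itp^2\lambda_{k,2}}\bigr| \leq |t|\eps^{-2\beta}|R_k(\eps^\beta p)| \leq C|t|\eps^\beta|p|^3.
\]
Combining the two error sources yields a pointwise-in-$p$ error of order $\eps^\beta(1+|t|)(|p|+|p|^3)$ in $L^2(\R^2_x)$; multiplying by $\widehat{a}(p)$ and integrating over $\Omega$ gives the claimed bound $C\eps^\beta(1+|t|)$, since $\int_\Omega(|p|^2+|p|^6)|\widehat{a}(p)|^2\,dp<\infty$.

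The main obstacle is really just bookkeeping: ensuring that both the projection-error and the truncation-error of the Rayleigh-Schr\"odinger series are handled uniformly in $p\in\Omega$, and verifying that the gain of $(\eps^\beta|p|)^3$ from the cubic remainder precisely compensates the dangerous prefactor $|t|\eps^{-2\beta}$, producing the linear-in-$t$ growth asserted in \eqref{invFsoln2} rather than anything worse. Everything else is essentially algebraic, and the compact support of $\widehat{a}$ is what makes the whole argument work, since it converts the small parameter $\eta=\eps^\beta p$ into something uniformly small.
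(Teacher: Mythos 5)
Your proof is correct and follows essentially the same route as the paper. Both arguments work fiberwise after the partial Fourier transform, exploit the compact support of $\widehat a$ to treat $\eta=\eps^\beta p$ as a uniformly small parameter, invoke analytic perturbation theory for the perturbed eigenvalue and eigenprojection, and split the total error into (i) an $\mathcal O(\eps^\beta|p|)$ term from leaving/re-entering the perturbed eigenspace and (ii) an $\mathcal O(\eps^\beta|p|^3|t|)$ term from truncating the eigenvalue phase at second order, the cubic remainder cancelling the dangerous $\eps^{-2\beta}$ prefactor. The only cosmetic difference is that you approximate $\chi_k$ by the (unnormalized) projection $P_k(\eps^\beta p)\chi_k$, whereas the paper uses the normalized perturbed eigenfunction $\chi_k(\cdot,\eps^\beta p)$; since $P_k(0)\chi_k=\chi_k$ these approximants differ from $\chi_k$ by the same $\mathcal O(\eps^\beta|p|)$ and yield identical estimates.
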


\begin{proof}
First, let $p_0>0$ be a momentum cut-off, such that $\widehat{a}(p)=0$ for all $|p|>p_0$. Then, there exists an $\eps_0 > 0$, such that 
both $\lambda_k(\eps  p)$ and $\chi_k(\cdot, \eps  p)\in L^2(\R^2)$ are analytic for
all $\eps < \eps_0$ and $|p| < p_0$ on
\[
I_0=\big(-\eps_0 p_0, \eps_0 p_0\big)\subset \R.
\] 
This then implies that there exists a constant $c_0=c_0(p_0)>0$, such that
\begin{equation}\label{chi}
	\big\|\chi_k - \chi_k(\cdot, \eps  p)\big\|_{L^2(\R^2)} \le c_0 \eps  |p|.
\end{equation}
Keeping in mind that $\widehat{\psi}_0 (x,p)= \widehat{a}(p) \chi_k(x),$ we can thus estimate 
\begin{equation}\label{initData}
	\big\|\widehat{\psi}_0 (\cdot,p) - \widehat{a}(p)\chi_k(\cdot, \eps  p)\big\|_{L^2(\R^2)} \le c_0 \eps  |p|\,|\widehat{a}(p)|.
\end{equation}
Next, we observe that
\begin{equation}\begin{split}\label{soln1}
	e^{-it\eps^{-2}(H_0+\eps  p \widehat{V}^{\eps,p})}\chi_k(x,\eps  p) = e^{-it\eps^{-2}\lambda_k(\eps  p)}\chi_k(x,\eps  p).
\end{split}\end{equation}
In view of \eqref{EigenAnalytic} and using Taylor expansion of the exponential function, we find
\begin{equation*}\begin{split}
	\Big|e^{-it\eps^{-2}\lambda_k(\eps  p)} - e^{-it\eps^{-2}\lambda}e^{-it\eps^{-1}p\lambda_{k,1}}e^{-itp^2\lambda_{k,2}}\Big| \lesssim \eps |p|^3|t|, 
\end{split}\end{equation*}
uniformly on $I_0$.
Combining this expansion with \eqref{chi}, and using a triangle inequality, we obtain 
\begin{equation}\begin{split}\label{evoExp}
	&\Big\|e^{-it\eps^{-2}\lambda_k(\eps  p)}\chi_k(\cdot,\eps  p)
	-e^{-it\eps^{-2}\lambda}e^{-it\eps^{-1}p\lambda_{k,1}}e^{-itp^2\lambda_{k,2}}\chi_k \Big\|_{L^2(\R^2)}\\
	&\le \Big\|e^{-it\eps^{-2}\lambda_k(\eps  p)}(\chi(\cdot,\eps  p)-\chi_k)\Big\|_{L^2(\R^2)} +\\
	&\quad \, + \Big\| \big(e^{-it\eps^{-2}\lambda_k(\eps  p)} - e^{-it\eps^{-2}\lambda}e^{-it\eps^{-1}p\lambda_{k,1}}e^{-itp^2\lambda_{k,2}} \big) \chi_k \Big\|_{L^2(\R^2)}\\
	&\lesssim  \eps  |p|(1+p^2|t|).
\end{split}\end{equation}
Recall that the exact solution to \eqref{FinalPDE} is given by
\[
\widehat{\psi}^\eps (t,x,p)= e^{-it\eps^{-2}(H_0+\eps  p \widehat{V}^{\eps,p})}\widehat{\psi}_0(x,p),
\] 
where $\widehat{\psi}_0 (x,p)= \widehat{a}(p) \chi_k(x)$. Denoting 
\[
	\widehat{\phi}_k^\eps(t,x,p) = e^{-it\eps^{-2}\lambda}e^{-it\eps^{-1}p\lambda_{k,1}}e^{-itp^2\lambda_{k,2}}\widehat{a}(p)\chi_k(x),
\]
we obtain, in view of \eqref{initData} and \eqref{evoExp}, that
\[
\big\|\widehat{\psi}^\eps(t,\cdot,p) - \widehat{\phi}_k^\eps(t,\cdot,p)\big\|_{L^2(\R^2)} \lesssim \eps  |p|(1+p^2|t|)|\widehat{a}(p)|.
\]
This holds uniformly for $p \in I_0$, and so by the compact support of $\widehat{a}(p)$ we may further estimate
\begin{equation}\label{thmEst1}
\big\|\widehat{\psi}^\eps(t) - \widehat{\phi}_k^\eps(t)\big\|_{L^2(\R^3)} \le C \eps  (1+|t|) , 
\end{equation}
for some constant $C=C(p_0, a)>0$.
Next, we consider the approximate solution under the (partial) inverse Fourier transform,
\begin{equation}\begin{split}\label{invFsoln}
	\phi^\eps_k(t,x,z) &= (\mathcal F^{-1}\widehat{\phi}_k^\eps)(t,x,z)\\
	&= e^{-it\eps^{-2}\lambda}\chi_k(x)\frac{1}{\sqrt{2\pi}}\int_\R e^{ip(z-t\eps^{-1}\lambda_{k,1})}\big(e^{-itp^2\lambda_{k,2}}\widehat{a}(p)\big) dp,
\end{split}\end{equation}
and note that the term $e^{-itp^2\lambda_{k,2}}$ is just the Fourier transform of the free evolution $e^{it\lambda_{k,2} \Delta_z}$ in 
$z$-direction. 
Thus, we explicitly obtain
\[
\frac{1}{\sqrt{2\pi}}\int_\R e^{ip(z-t\eps^{-1}\lambda_{k,1})}\big(e^{-itp^2\lambda_{k,2}}\widehat{a}(p)\big)dp = e^{it\lambda_{k,2}\Delta_z}a\big(z-t\eps^{-1}\lambda_{k,1}\big),
\]
where $a$ is the Fourier-inverse of $\widehat{a}$. Finally, applying this to \eqref{thmEst1}, along with Plancherel's theorem, gives us
\begin{equation}
\big\|\psi^\eps(t) - \phi_k^\eps(t)\big\|_{L^2(\R^3)} =\big\|\mathcal F^{-1}\big(\widehat{\psi}^\eps(t) - \widehat{\phi}_k^\eps(t)\big)\big\|_{L^2(\R^3)}
\lesssim  \eps  (1+|t|),
\end{equation}
for all $\eps < \eps_0$. This completes the proof.
\end{proof}


\subsection{More general initial data} \label{sec:general}


The previous result requires us to impose rather severe restrictions on the initial data. In this subsection we shall show 
how to generalize Theorem \ref{finRes} to arbitrary initial data in $L^2$. The price one pays, however, 
is the loss of an explicit, purely $\eps$-dependent convergence rate: 

Indeed, by Lemma \ref{specInfo} we know that $\sigma(H_0) $ is an increasing sequence of eigenvalues $\lambda^n\in \R$, $n\in \N$, with finite-multiplicity $m_n\in \N$. 
Note that here we use the superscript $n$ to index the eigenvalues, not to indicate a power. 
For each $n\in \N$ we denote the perturbed eigenvalues of $\H^\eps(p)$ by $\lambda^n_1(\eps  p), ..., \lambda^n_{m_n}(\eps  p)$, 
where $m_n\in \N$ is the multiplicity of a given $\lambda^n\in \sigma(H_0)$. Similarly to \eqref{EigenAnalytic}, we can express $\lambda^n_k(\eps  p)$ as a power series:
\begin{equation*}
\lambda^n_k(\eps  p) = \lambda^n + \sum_{j=1}^\infty (\eps  p)^j\lambda^n_{k,j}, \quad n\in \N.
\end{equation*}
We recall that the set of unperturbed eigenfunctions associated to $\lambda^n$, i.e.
\[
\big\{\chi^n_k \, : \, 1\le n < \infty, 1\le k \le m_n\big\}
\] 
forms an orthonormal basis of $L^2(\R^2)$. Hence, any $\psi_0 \in L^2(\R^3)\simeq L^2(\R^2)\otimes L^2(\R)$ can be written as
\[
	\psi_0 (x,z)= \sum_{n=1}^\infty\sum_{k=1}^{m_n}a_k^n(z)\chi_k^n(x),
\]
with amplitudes $a_k^n \in L^2(\R_z)$. Under the partial Fourier transform this becomes 
\begin{equation}\label{bDef1}
	\widehat{\psi}_0 (x,p)= \sum_{n=1}^\infty\sum_{k=1}^{m_n}\widehat{a}_k^n(p)\chi_k^n(x),
\end{equation}
where the coefficients $\widehat{a}_k^n \in L^2(\R)$ are not necessarily compactly supported. For such general initial data, we have the following result:

\begin{proposition}\label{infRes} 
Assume $\Theta$ satisfies assumption \eqref{assump} and let $\psi_0 \in L^2(\R^3)$.
Then, for any $\delta > 0$ there exists $N_\delta \in \N$ and $\eps_\delta > 0$ such that for all $0 < \eps < \eps_\delta$, an approximate solution to \eqref{resIVP} is given by
\begin{equation}\label{fullApp}
	\phi^\eps(t, x, z) = \sum_{n=1}^{N_\delta}\sum_{k=1}^{m_n}e^{-it \lambda^n/\eps^2}\chi^n_k(x)e^{it\lambda^n_{k,2}\Delta_z}b_k^n\Big(z-\tfrac{t \lambda^n_{k,1}}{\eps}\Big),
\end{equation}
where all $\widehat b_k^n\in C_{0}(\R)$ are compactly supported $L^2$-approximations of the coefficients $\widehat{a}_k^n$ appearing in \eqref{bDef1}, 
and we have the estimate
\begin{equation}\begin{split}
\left\|e^{-it \H^\eps/\eps^2}\psi_0 - \phi^\eps(t)\right\|_{L^2(\R^3)}  \lesssim \delta(1+|t|),
\end{split}
\end{equation}
\end{proposition}

\begin{proof} By linearity of \eqref{resIVP}, we will be able to reduce our initial data to a finite sum and apply the same analysis from Theorem \ref{finRes}: 
First, for any $\delta > 0$ we can choose $N_\delta \in \N$ large enough, such that
\[
	\widehat{\varphi}_1 (x,p) = \sum_{n=1}^{N_\delta}\sum_{k=1}^{m_n}\widehat{a}_k^n(p)\chi_k^n(x)
\]
satisfies
\begin{equation*}
	\big\|\widehat{\psi}_0 - \widehat{\varphi}_1\big\|_{L^2(\R^3)} \le \delta.
\end{equation*}
Then, for any $\delta_1 > 0$ we may approximate each $\widehat{a}_k^n\in L^2(\R)$, where $1\le n \le N_\delta$, $1\le k \le m_n$, by a compactly supported 
$\widehat{b}_k^n \in C_0(\R)$, such that
\begin{equation}\label{bDef2}
	\big\|\widehat{a}_k^n - \widehat{b}_k^n\big\|_{L^2(\R)} \le \delta_1.
\end{equation}
Defining
\[
	\widehat{\varphi}_2 (x,p) = \sum_{n=1}^{N_\delta}\sum_{k=1}^{m_n}\widehat{b}_k^n(p)\chi_k^n(x),
\]
we thus have
\[
	\big\|\widehat{\psi}_0 - \widehat{\varphi}_2\big\|_{L^2(\R^3)} \lesssim \sum_{n=1}^{N_\delta}m_n \delta_1+ \delta.
\]
Finally, let
\begin{equation}\label{fullAppInit}
	\widehat{\varphi}_3 (x,p) = \sum_{n=1}^{N_\delta}\sum_{k=1}^{m_n}\widehat{b}_k^n(p)\chi_k^n(x,\eps  p),
\end{equation}
then, by \eqref{chi} we can estimate
\begin{equation*}\begin{split}
	\big\|\widehat{\psi}_0 - \widehat{\varphi}_3\big\|_{L^2(\R^3)} \lesssim \sum_{n=1}^{N_\delta}m_n(\eps +\delta_1) + \delta.
\end{split}\end{equation*}

Having reduced our initial data to a finite sum of terms compactly supported in $p\in \R$, we may now directly apply Theorem \ref{finRes} to each term in the approximate initial data given by \eqref{fullAppInit}. Hence, we obtain
\begin{equation*}\begin{split}
	\big\|\psi^\eps (t) - \phi^\eps(t)\big\|_{L^2(\R^3)}\lesssim\sum_{n=1}^{N_\delta}m_n\big(\eps (1+|t|) + \delta_1 \big) + \delta,
\end{split}\end{equation*}
where $\phi^\eps$ is given by \eqref{fullApp} and $b_k^n$ denotes the Fourier-inverse of $\widehat b_k^n$.
To finally bound the error terms, we first choose $\delta > 0$ arbitrarily small. 
This choice will fix the values of $N_\delta$ and $\sum_{n=1}^{N_\delta}m_n$. We can then choose $\eps_\delta<1$ and $\delta_1<1$ small enough, 
such that for all $0 < \eps < \eps_\delta$,
\[
	\sum_{n=1}^{N_\delta}m_n\big(\eps (1+|t|) + \delta_1 \big) \lesssim \delta(1+|t|).
\]
This yields the claim.
\end{proof}


\subsection{Computation of the main perturbation coefficients} \label{sec:RalCoef} 

We finally turn to the computation of the perturbation coefficients appearing in our leading order approximation $\phi^\eps$. 
Recall that for the $n$-th eigenspace of $H_0$, the 
corresponding perturbed eigenvalue of $\H^\eps(p)$ can be written as
\begin{equation}\label{EigenAnalytic2}
\lambda^n_k(\eps  p) = \lambda^n + \eps  p \lambda^n_{k,1} + (\eps  p)^2 \lambda^n_{k,2}+\mathcal O((\eps  p)^3).
\end{equation}
Since only $\lambda^n_{k,1}$ and $\lambda^n_{k,2}$, enter into the definition of $\phi^\eps$, we shall in the following only
focus on the computation of these two coefficients.

Indeed, using well-known formulas from perturbation theory (see, e.g., \cite{reed}), we have:
\begin{equation*}\begin{split}
	\widetilde{\lambda}_{k,1}^n &= \big \langle \chi_k^n, \widehat{V}^{\eps,p}\chi_k^n \big \rangle_{L^2(\R^2)},\\
	\widetilde{\lambda}_{k,2}^n &= -\sum_{\ell =1, \ell \neq n}^\infty\sum_{j=1}^{m_\ell} ( \lambda^\ell - \lambda^n)^{-1} 
	\big|\big \langle \chi_k^n, \widehat{V}^{\eps,p}\chi_j^\ell \big\rangle_{L^2(\R^2)}\big|^2.
\end{split}\end{equation*}
Here, we use the tilde notation to indicate that these are not yet the final values for $\lambda^n_{k,1}$ and $\lambda^n_{k,2}$, 
since $\widehat{V}^{\eps,p}$ will introduce some additional factors of $\eps  p$. 
To obtain an expansion of the form \eqref{EigenAnalytic2}, we thus need to recombine terms according to their respective order in $\eps  p$. To this end, 
we calculate
\begin{equation*}\begin{split}
	\big \langle \chi_k^n(\cdot,0), \widehat{V}^{\eps,p}\chi_j^\ell  \big \rangle_{L^2(\R^2)}& = 
	\big\langle \chi_k^n , \big(\eps p + 2|\cdot|^\alpha \Theta \big)\chi_j^\ell \big\rangle_{L^2(\R^2)}\\
	&= 2 \, \big\langle \chi_k^n, |\cdot|^\alpha \Theta \chi_j^\ell \big\rangle_{L^2(\R^2)} + \eps p\, \delta_{k,j}\delta_{n,\ell},
\end{split}\end{equation*}
where $\delta_{k,j}$ is the Kronecker delta. Denoting 
\[
	v_{k,j}^{n,\ell} := 2\, \big\langle \chi_k^n, |\cdot|^\alpha \Theta \chi_j^\ell \big\rangle_{L^2(\R^2)}\in \R_+,
\]
we find $\widetilde{\lambda}_{k,1}^n = v_{k,k}^{n,n} + \eps p$, and 
\begin{equation*}\begin{split}
\widetilde{\lambda}_{k,2}^n &= -\sum_{\ell=1, \ell  \neq n}^\infty\sum_{j=1}^{m_\ell} (\lambda^\ell - \lambda^n)^{-1}\big|v_{k,j}^{n,\ell} +\eps p\delta_{k,j}\delta_{n,\ell}\big|^2\\
	&= -\sum_{\ell=1, \ell \neq n}^\infty\sum_{j=1}^{m_\ell} (\lambda^\ell - \lambda^n)^{-1}|v_{k,j}^{n,\ell} |^2.
\end{split}\end{equation*}
Hence, in accordance with the power series \eqref{EigenAnalytic2} we have:
\begin{equation*}
	{\lambda}_{k,1}^n = v_{k,k}^{n,n},\quad
	{\lambda}_{k,2}^n = 1 - \sum_{l=1, \ell \neq n}^\infty\sum_{j=1}^{m_\ell} (\lambda^\ell - \lambda^n)^{-1}|v_{k,j}^{n,\ell} |^2.
\end{equation*}
Unfortunately, it seems that for general $\alpha>0$ there is no explicit expression available for the $L^2$-inner products which define the coefficients $v_{k,j}^{n,\ell}$, even in the case where $\Theta \equiv 1$.


\section{Stability under perturbations vanishing in the vicinity of the origin}\label{sec:sing}


In this section we show that the previous results are stable against a large class of singular perturbations. To this end, we consider magnetic vector potentials of the form:
\begin{equation}\label{Aalphab}
\mathcal A_\a^\eps(x)=\frac{1}{\eps^{\alpha+1}} \big(0, 0, |x|^\alpha \Theta(\vartheta) +\a(x)\big), \quad \alpha>0,
\end{equation}
where $\a\in L^\infty_{\rm loc}(\R^2)$ is polynomially bounded at infinity and vanishes in an arbitrarily small neighborhood of the origin. 
More specifically, we impose:
\begin{assumption}\label{Aassump}
There exist constants $x_0>0$, $M<\infty$, and $\beta<\infty$, such that 
\begin{equation*}
\a(x) = \mathbbm{1}_{\{|x|\ge x_0\}} \a(x) \quad \text{and} \quad  |\a(x)| \le M |x|^\beta.
\end{equation*}
\end{assumption}
Note that, since for any $s>0$:
\[
\sup_{|x|\ge x_0} |x|^{-s} = \frac{1}{x_0^s},
\]
 Assumption \ref{Aassump} implies that 
\[
|\a(x)| \le \frac{M}{x_0^s} |x|^{\beta+s}.
\]
Thus, one can take w.l.o.g. $\beta>0$ sufficiently large, to ensure
\begin{equation}\label{alphadelta}
 \beta\ge 3+\alpha.
\end{equation}

Using the same rescaling of the spatial variables as in Section 1, a straightforward computation shows that instead of \eqref{resFH}, one obtains
\begin{equation}\label{resFHsing}
\begin{split}
\mathcal H_\a^\eps(p) = &\, H_0 + \eps  p \widehat{V}^{\eps,p}(x) 
+\eps^{\beta-\alpha }\widehat{W}_\a^{\eps,p}(x)\\
= & \, \mathcal H^\eps(p) +\eps^{\beta-\alpha}\widehat{W}_\a^{\eps,p}(x),
\end{split}
\end{equation}
where $H_0$ and $\widehat{V}^{\eps,p}$ are the same as in \eqref{H} and \eqref{effpot}, respectively, and
\begin{equation}\label{W}
\begin{split}
\widehat{W}_\a^{\eps,p}(x) =   2p \eps^{1-\beta}\a(\eps x) + 2\eps^{-\beta }|x|^\alpha \Theta(\vartheta)\a(\eps x)   + \eps^{-\beta -\alpha }(\a(\eps x))^2.
\end{split}
\end{equation}
\begin{lemma} Assumption \ref{Aassump} implies that $\widehat W^\eps_{\mathfrak a}$ is polynomially bounded, uniformly for $0<\eps  \le 1$, i.e.
\begin{equation}\label{bddW}
\big| \widehat{W}_\a^{\eps}(x)\big| \lesssim \big(1+|x|\big)^{2\beta}.
\end{equation}
\end{lemma}
\begin{proof}
We first estimate, using  Assumption \ref{Aassump} and the fact that $\Theta $ is bounded:
\begin{align*}
|\widehat{W}_\a^{\eps,p}(x)| \lesssim  &\,  |p| \eps^{1-\beta}|\a(\eps x)| +  \eps^{-\beta } |x|^\alpha |\a(\eps x)| + \eps^{-\beta -\alpha } |\a(\eps x)|^2  \\
\lesssim  & \, (1+ |p|) \big (\eps |x|^\beta +  |x|^{\alpha +\beta} + \eps^{\beta -\alpha} |x|^{2\beta} \big)
\end{align*}
We now notice that $\beta -\alpha\ge 3$, in view of \eqref{alphadelta}, and thus all the powers of $\eps$ are greater than zero, which yields
\[
|\widehat{W}_\a^{\eps,p}(x)| \lesssim (1+|p|)(1+|x|)^{2\beta}.
\]\end{proof}

This setting will allow us to prove that the result of Section \ref{sec:regular} remains valid. More precisely we shall show:

\begin{theorem}\label{finResing}
Suppose $\mathcal A_\a^\eps$ is given by \eqref{Aalphab}, where $\a$ satisfies Assumption \ref{Aassump}. Let $\lambda \in \sigma(H_0)$ and $P_\lambda$ the corresponding spectral projection. 
Then there exists $\eps_0\in (0,1]$, such that for all 
$0<\eps<\eps_0$:
\begin{equation*}
\left\|\Big(e^{-it \H_\a^\eps/\eps^2} - e^{-it \H^\eps/\eps^2} \Big)P_\lambda \right\|  \lesssim \eps  (1+|t|).
\end{equation*}
Here, $\H_\a^\eps$ and $\H^\eps$ are defined via their fiber representations given in \eqref{resFHsing}.
\end{theorem}

For the proof of this theorem, we shall rely on the asymptotic perturbation theory developed in \cite{nenciu} (see also \cite{Hu, K} and the 
references therein). Sending the reader to \cite{nenciu} for more technical details, we shall give in the next subsection the main ideas behind this perturbation theory and the 
construction of almost invariant subspaces. 
Using these, the proof of Theorem \ref{finResing} will then be given in Subsection \ref{ssec:proof}, where we end with some concluding remarks.


\subsection{Construction of almost invariant subspaces}\label{ssec:inv} 


The fact that $\a$ is in general not homogenous implies that, in contrast to Section 2, the perturbation parameter is no longer given by $\eps p$, but instead $\eps$ and $p$ have 
to be seen as independent from each other. However, it is clear from the previous discussions that, once a momentum cut-off $p_0>0$ has been chosen, 
the precise value of $|p|\le p_0$ no longer matters. We shall therefore stop 
tracking the dependence on $p$ and instead only focus on the dependence on $\eps \ll 1$. To emphasize this fact, we introduce the following shorthand notation: Let 
\begin{equation*}\label{not}
\mathcal V_\a^\eps  (x) := p \widehat{V}^{\eps,p}(x) 
+\eps ^{\beta-\alpha -1}\widehat{W}_\a^{\eps,p}(x),
\end{equation*}
so that for $\a \equiv 0$ we get back $\mathcal V_0^\eps   \equiv  p \widehat{V}^{\eps,p}$. Using this, we consider the perturbed Hamiltonian $\mathcal H^\eps _\a$ in the form
\[
\mathcal H^\eps _\a   = H_0+ \eps  \mathcal V_\a^\eps (x).
\]
From \eqref{bddW} and the explicit form of $\widehat{V}^{\eps,p}(x) $, we know that $\mathcal V_\a^\eps  (x)$ is polynomially bounded uniformly
for $0<\eps  \le 1$:
\begin{equation}\label{bddpV}
\big| \mathcal V_\a^\eps (x)\big| \lesssim (1+p_0^2)\big(1+|x|\big)^{2\beta}.
\end{equation}

Now, let $\lambda \in \sigma(H_0)$, $P_\lambda$ be the corresponding spectral projection, and recall that if
$\a \equiv 0$, the potential $\mathcal V_0^\eps $ is $H_0$-bounded (see Section 2). Thus, for $\eps \in (0,1]$ sufficiently small, we already know from regular perturbation theory that 
\begin{equation}\label{Petabdd}
P_{\lambda}(\eps )=  \sum_{j=0}^{\infty}\eps ^jQ_j(\eps ) \equiv P_\lambda + \sum_{j=1}^{\infty}\eps ^jQ_j(\eps ).
\end{equation}
The $Q_j$ can thereby be expressed via (cf. \cite{K}):
\begin{equation}\label{Ej}
Q_j(\eps ) =(-1)^j\frac{i}{2\pi}\oint_\Gamma (H_0-z)^{-1}\big(\mathcal V_0^\eps  (H_0-z)^{-1}\big)^j dz,
\end{equation}
where
$$
\Gamma =\Big\{ z\in \mathbb C \, : \, |z-\lambda |= \tfrac{d}{2} \Big\},\quad d={\rm dist}\, (\lambda, \sigma(H_0)\setminus\{\lambda\}).
$$
From \eqref{Petabdd} and $P_{\lambda}=P_{\lambda}^2$, we can compute
\[
	\sum_{j=0}^{\infty}\eps^jQ_j(\eps) = \Big(\sum_{i=0}^{\infty}\eps^iQ_i(\eps)\Big)\Big(\sum_{k=0}^{\infty}\eps^kQ_k(\eps)\Big),
\]
and matching terms with equal powers of $\eps$, we find
\[
	Q_j(\eps)=\sum_{l=0}^jQ_l(\eps)Q_{j-l}(\eps).
\]
Similarly, one can use \eqref{Petabdd} to expand $P_{\lambda}(\eps)\mathcal H^\eps  = \mathcal H^\eps P_{\lambda}( \eps)$, and combining terms by their power in $\eps$, one finds
\begin{equation*}\label{EjV}
[H_0, Q_{j+1}(\eps)]=-[\mathcal V_0^\eps, Q_{j}(\eps)].
\end{equation*}

In the case $\a \not =0$, however, we cannot directly follow the same approach, since \eqref{bddpV} does not imply that $\mathcal V_\a^\eps $ is $H_0$-bounded. 
Thus, \eqref{Ej} is {\it a-priori} not well defined if $\mathcal V_0^\eps $ is replaced by $\mathcal V_\a^\eps $. 

\begin{remark}\label{conspec}
In the particular case where $\a\in C^1(\R^2)$ and such that
\[
\a(x) = \begin{cases}
\quad 0  \quad \quad \quad \, \quad \text{ for $|x|\le 1$},\\
 - |x|^\alpha  \Theta(\vartheta)  \quad \text{ for $|x|\ge 2$}, \end{cases}
\]
we have
\begin{equation*}
\mathcal V^\eps_\a(x) = 
\begin{cases} \mathcal V^\eps_0(x) \quad \ \text{for $|x| \le \tfrac{1}{\eps}$}, \\
\eps p^2 \quad \quad \, \,  \text{for $|x|\ge  \tfrac{2}{\eps}$.} \end{cases}
\end{equation*}
This implies that the spectrum of $\mathcal H^\eps_\a$ includes the interval $[(\eps p)^2, \infty)\subset \R$ and thus $\sigma(\mathcal H^\eps_\a)$ is no longer discrete, 
in contrast to $\sigma(\mathcal H^\eps_0)$.
\end{remark}

The way out of this problem 
is to use  an alternative formula for $Q_j(\eps )$, obtained from \eqref{Ej} via the residue theorem (see e.g. \cite{reed}, or \cite[Chapter II.2.1]{K}  in the self-adjoint case):
\begin{equation}\label{altQ}
Q_j(\eps )=(-1)^{j+1}\sum_{\nu_1,...,\nu_{j+1} \geq 0; \Sigma_{l=1}^{j+1}\nu_l=j}
S^{\nu_1}\mathcal V_0^\eps  S^{\nu_2}\mathcal V_0^\eps ...\mathcal V_0^\eps  S^{\nu_{j+1}}.
\end{equation}
Here, $S$ is the reduced resolvent of $H_0$ at $\lambda$, i.e.
\begin{equation}\label{S}
S=\frac{i}{2\pi}\oint_\Gamma (H_0-\zeta)^{-1} (\lambda -\zeta)^{-1} \, d\zeta,
\end{equation}
and, by convention, for $\nu_j=0$:
\begin{equation}\label{RedResCon}
S^0=-P_\lambda.
\end{equation}

The fact that \eqref{altQ} remains well-defined if $\mathcal V_0^\eps $ is replaced by $\mathcal V_\a^\eps $ is the main ingredient in the development of an asymptotic perturbation theory. Indeed, we have 
the following lemma, originally proved in \cite{nenciu}:
\begin{lemma}\label{Ejeta} For $j \in \N$, let 
\begin{equation}\label{EjS}
Q^\a_j(\eps )=(-1)^{j+1}\sum_{\nu_1,...,\nu_{j+1} \geq 0; \Sigma_{l=1}^{j+1}\nu_l=j}
S^{\nu_1}\mathcal V_\a^\eps  S^{\nu_2}\mathcal V_\a^\eps ...\mathcal V_\a^\eps  S^{\nu_{j+1}}.
\end{equation}
Then, all $Q^\a_j(\eps )$ are well defined and bounded on $C_0^\infty (\mathbb R^2)$, and their 
extensions by continuity satisfy $Q^\a_j(\eps ) L^2(\mathbb R^2) \subset \mathcal D(H_0) \cap \mathcal D(\mathcal V_\a^\eps )$, as well as
\begin{equation}\label{Ejeta2}
Q^\a_j(\eps ) =\sum_{l=0}^j Q^\a_l(\eps ) Q^\a_{j-l}(\eps ).  
\end{equation}
Denoting by $[\cdot, \cdot]$ the usual commutator bracket, we also have
\begin{equation}\label{Ejeta3}
[H_0, Q^\a_j(\eps )] f =-[\mathcal V_\a^\eps , Q^\a_{j-1}(\eps )] f,\ \text{for $f \in \mathcal D(H_0) \cap \mathcal D(\mathcal V_\a^\eps )$},
\end{equation}
and
\begin{equation}\label{Ejeta4}
[\mathcal V_\a^\eps , Q^\a_j(\eps )] \ \text{is bounded on $\mathcal D(\mathcal V_\a^\eps )$, for all $j\in \N$}.
\end{equation}
\end{lemma}

The main point of this lemma concerns the boundedness of $Q^\a_j(\eps )$, which in itself relies on the fact that the reduced 
resolvent $S$ preserves the exponential decay of the eigenfunctions of $H_0$, see the appendix. 

\begin{proof}[Proof (Sketch)] 
First, by taking $K=\Gamma$ in Proposition \ref{ResProp} and applying it to \eqref{S}, there exists $\omega_\Gamma >0$ and $M_\Gamma <\infty$ such that 
for all $\omega \in [0, \omega_\Gamma]$:
\begin{equation}\label{bddS}
\Big \|e^{\omega \langle\, \cdot\, \rangle}Se^{-\omega\langle\, \cdot\, \rangle}\Big \| \le M_\Gamma,
\end{equation}
where $\langle x \rangle = (1+|x|^2)^{1/2}$. Similary, as a consequence of Proposition \ref{EigenProp}, there exists $M_{\omega_\Gamma} < \infty$ such that
\begin{equation}\label{expPEst}
	\Big \| e^{\omega_\Gamma\langle\, \cdot\, \rangle}P_\lambda e^{\omega_\Gamma\langle\, \cdot\, \rangle}\Big \| \le M_{\omega_\Gamma}.
\end{equation}
Notice that each term in the sum of \eqref{EjS} contains at least one $S^0=-P_\lambda$ by \eqref{RedResCon} and the restrictions on $\nu_l$. 
In particular, 
\[ 
Q_0(\eps )=Q_0^\a(\eps)=P_\lambda.
\]
We may therefore insert appropriate $e^{\omega \langle\, \cdot\, \rangle}e^{-\omega \langle\, \cdot\, \rangle}$ (or its opposite) between terms in \eqref{EjS} in order to group the appearing factors
$\mathcal V_\a^\eps $ (which are polynomially bounded in view of \eqref{bddpV}) with exponential decay terms $e^{-\omega \langle\, \cdot\, \rangle}$, 
while grouping the growth terms $e^{\omega_\Gamma\langle\, \cdot\, \rangle}$ with $S$ and $P_\lambda$ in such a way that these 
terms are bounded by \eqref{bddS} and \eqref{expPEst}. Hence, we obtain that \eqref{EjS} is 
indeed well defined and bounded. The algebraic properties \eqref{Ejeta2}--\eqref{Ejeta4} can be derived along the same lines as in the case of regular perturbation theory.
\end{proof}

With Lemma \ref{Ejeta} in hand, the construction of almost invariant subspaces of $H^\eps _\a$ follows closely Section III of \cite{nenciu}: 
to begin with, we consider the truncated perturbation series
\begin{equation}\label{TN}
T^\a_N(\eps )=\sum_{j=0}^N\eps ^jQ^\a_j(\eps ),
\end{equation}
where $Q^\a_j$ is given by \eqref{EjS}. 
The operator $T^\a_N$ defines an almost projection, since from \eqref{TN} and \eqref{Ejeta2}, it follows
\begin{equation}\label{TNEst}
\Vert T^\a_N(\eps )^2 -T^\a_N(\eps ) \Vert \lesssim  \eps ^{N+1}.
\end{equation}
Since $Q_0^\a(\eps)=P_\lambda$, we also have
\[
\lim_{\eps  \rightarrow 0}\Vert T^\a_N(\eps )- P_{\lambda}\Vert =0.
\]
The next step is to construct a {\it bona fide} orthogonal projection close to $T_N(\eps)$. To this end, we recall the following abstract result on almost projections:
\begin{proposition}\label{TP}
Let $T$ be a bounded self-adjoint operator satisfying 
\begin{equation*}
\Vert T^2-T \Vert \le c <\frac{1}{8}.
\end{equation*}
Then, $\big\{z\in\C\, : \, |z-1|=\frac{1}{2}\big\} \subset \rho(T)$ and
\begin{equation*}\label{PTles}
\Vert P-T\Vert \lesssim c,
\end{equation*}
where
\begin{equation*}\label{P}
P= \frac{i}{2\pi}\oint_{|z-1|= \tfrac{1}{2}} \, (T-z)^{-1} \, dz.
\end{equation*}
\end{proposition}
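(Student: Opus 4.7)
\emph{Approach.} The plan is to reduce both claims to one-variable facts by exploiting self-adjointness of $T$. The spectral mapping theorem applied to the polynomial $q(\lambda)=\lambda^2-\lambda$ translates $\|T^2-T\|\le c$ into $\sigma(T)\subset\{\lambda\in\R:|\lambda^2-\lambda|\le c\}$; for $c<1/8$ this sublevel set splits into two small intervals around $0$ and $1$, after which both assertions of the proposition follow from routine functional calculus.

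\emph{Step 1 (spectral localization).} Since $T=T^*$ we have $\sigma(T)\subset\R$ and, by the spectral mapping theorem, $\sigma(T^2-T)=\{\lambda^2-\lambda:\lambda\in\sigma(T)\}$, so every $\lambda\in\sigma(T)$ satisfies $|\lambda^2-\lambda|\le c$. The parabola $\lambda\mapsto\lambda^2-\lambda$ attains minimum $-1/4$ at $\lambda=1/2$; hence for $c<1/8<1/4$ the sublevel set decomposes as $I_0\cup I_1$ with
\[
I_0=\bigl[\tfrac{1-\sqrt{1+4c}}{2},\,\tfrac{1-\sqrt{1-4c}}{2}\bigr]\ni 0,\qquad I_1=\bigl[\tfrac{1+\sqrt{1-4c}}{2},\,\tfrac{1+\sqrt{1+4c}}{2}\bigr]\ni 1,
\]
two disjoint closed intervals separated by a gap that strictly contains $1/2$, and with $I_1\subset(1/2,3/2)$.

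\emph{Step 2 ($\Gamma\subset\rho(T)$ and properties of $P$).} The circle $\Gamma=\{|z-1|=1/2\}$ meets $\R$ only at $1/2$ and $3/2$, both of which Step~1 places outside $I_0\cup I_1\supset\sigma(T)$. Hence $\Gamma\subset\rho(T)$ and the Riesz formula for $P$ is well-posed. Standard theory gives $P^2=P$ and $[T,P]=0$; self-adjointness of $T$ together with the conjugation-symmetry of $\Gamma$ further forces $P^*=P$, so $P$ is the orthogonal projection onto the spectral subspace of $T$ associated with $\sigma(T)\cap I_1$.

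\emph{Step 3 and main obstacle.} Decomposing the underlying Hilbert space into the $T$-invariant orthogonal summands $\mathrm{Ran}\,P$ and $\mathrm{Ran}(I-P)$, the spectral theorem gives
\[
\|T-P\|=\max\Bigl\{\sup_{\lambda\in\sigma(T)\cap I_1}|\lambda-1|,\ \sup_{\lambda\in\sigma(T)\cap I_0}|\lambda|\Bigr\},
\]
and the endpoints in Step~1 bound both suprema by $\max\{(1-\sqrt{1-4c})/2,\,(\sqrt{1+4c}-1)/2\}\le\sqrt{2}\,c$, valid for $c<1/8$ by a short mean-value estimate. The only step that needs real attention is the elementary calculus in Step~1, where the threshold $c<1/8$ must be used precisely to guarantee that the sublevel set separates into two components, one enclosed by $\Gamma$ and the other strictly to the left of $\Gamma$; everything else is a direct application of self-adjoint functional calculus.
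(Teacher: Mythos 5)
Your argument is correct. The spectral-mapping step is sound (since $T^2-T$ is self-adjoint, its norm equals its spectral radius, so indeed $|\lambda^2-\lambda|\le c$ for every $\lambda\in\sigma(T)$), the splitting of the sublevel set into $I_0$ and $I_1$ for $c<1/4$ is right, the inclusion $I_1\subset(1/2,3/2)$ and the exclusion of $1/2$, $3/2$ from $I_0\cup I_1$ are correctly verified, and the final bound $\|T-P\|\le\sqrt{2}\,c$ follows from the endpoint estimates $(\sqrt{1+4c}-1)/2\le c$ and $(1-\sqrt{1-4c})/2\le\sqrt{2}\,c$ (the latter using $\sqrt{1-4c}\ge 1/\sqrt{2}$ for $c\le 1/8$).

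Note, however, that the paper does not prove the proposition itself; it defers to Section 3 of the cited reference of Nenciu. The argument there is of a different flavor: it does not use self-adjointness to localize the spectrum, but instead establishes $\Gamma\subset\rho(T)$ directly from the algebraic identity $(T-z)\bigl(T-(1-z)\bigr)=(T^2-T)-(z^2-z)$ together with the lower bound $|z^2-z|=|z|\,|z-1|\ge 1/4$ on $\Gamma$, and then extracts $\|P-T\|\lesssim c$ by expanding the resolvent and computing the contour integral term by term. That route has the advantage of working for general bounded (not necessarily self-adjoint) almost projections, which is why the smallness threshold appears as an explicit numerical constant; your route is shorter and more transparent but leans on the spectral theorem, which is available here only because the proposition assumes $T=T^*$ (as it is in the application, where $T_N(\eta)$ is self-adjoint). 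Both proofs are valid for the statement as given.

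One cosmetic remark: when you assert $\Gamma\subset\rho(T)$ you check only the two real points of the circle; you should say explicitly that the non-real points of $\Gamma$ lie in $\rho(T)$ because $\sigma(T)\subset\R$ — this is implicit in your Step 1 but worth stating, since it is exactly the place where self-adjointness is doing work that the reference's argument must replace by the resolvent factorization.
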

\begin{proof}
See Section 3 of \cite{nenciu}.
\end{proof}

In view of \eqref{TNEst}, $T^\a_N(\eps )$ satisfies the condition for this proposition, provided $\eps $ is small enough. We may then define the orthogonal projection
\begin{equation}\label{PN}
P^\a_N(\eps ):= \frac{i}{2\pi}\oint_{|z-1|= \tfrac{1}{2}} (T^\a_N(\eps )-z)^{-1} \, dz,
\end{equation}
which, in view of \cite[identity (6.3)]{nenciu}, satisfies for $\eps$ sufficiently small:
\begin{equation}\label{PT}
\begin{split}
& P^\a_N(\eps ) - T^\a_N(\eps ) = \big( T^\a_N(\eps )^2 -T^\a_N(\eps )\big) \\
& \, \times \frac{i}{2\pi} \oint_{|z-1|= \tfrac{1}{2}}  \frac{1}{z(1-z)}  \left(1+\frac{T^\a_N(\eps )}{z(1-z)}\right)  \left(1+\frac{ T^\a_N(\eps )^2 -T^\a_N(\eps )}{z(1-z)}\right)^{-1}  dz.
\end{split}
\end{equation}
Proposition \ref{TP} therefore implies
\begin{equation}\label{PTeta}
\Vert P^\a_N(\eps ) -T^\a_N(\eps ) \Vert \lesssim  \eps ^{N+1}.
\end{equation}
Moreover, \eqref{Ejeta3} implies
\begin{equation*}\label{HTNeta}
[\H^\eps _\a, T^\a_N(\eps )]= \eps ^{N+1}[\mathcal V_\a^\eps , Q^\a_N(\eps )],
\end{equation*}
and hence, from \eqref{PN} we find
\begin{equation*}
[\H^\eps _\a, P^\a_N(\eps )] =-
\eps ^{N+1} \frac{i}{2\pi}\oint_{|z-1|= \tfrac{1}{2}} (T_N(\eps )-z)^{-1}[\mathcal V_\a^\eps , Q^\a_N(\eps )] (T^\a_N(\eps )-z)^{-1}dz.
\end{equation*}
This together with \eqref{Ejeta4} gives the crucial estimate
\begin{equation}\label{aiP}
\big\Vert [\H^\eps _\a, P^\a_N(\eps )] \big\Vert \lesssim \eps ^{N+1},
\end{equation}
i.e., in the language of \cite[Section II]{nenciu}: 
\[
\text{$P^\a_N(\eps )L^2(\mathbb R^2)$ {\it are almost invariant subspaces of order $\eps ^{N+1}$ associated to $\H^\eps _\a$.}}
\] 
For our purposes, $P^\a_N(\eps )$ will serve to replace the perturbed spectral projections in the analytic case. 

\medskip

Next, we recall that $\mathcal V_\a^\eps  = p \widehat{V}^{\eps,p}+\eps ^{\beta-\alpha -1}\widehat{W}_\a^{\eps,p}$. Inserting this expression into \eqref{EjS} and 
collecting all the terms which do not contain $\widehat{W}_\a^{\eps,p}$ allows us to rewrite
\[
Q_j^\a(\eps) = Q_j (\eps) + \eps^{\beta -\alpha -1} R_j^\a(\eps), 
\]
and since $R_j^\a(\eps)$ inherits the boundedness from $Q_j^\a(\eps)$, we have
\begin{equation}\label{Qest}
\| Q_j^\a(\eps) - Q_j (\eps) \| \lesssim \eps^{\beta -\alpha -1} .
\end{equation}
With estimates \eqref{PTeta}, \eqref{aiP}, and \eqref{Qest} at hand, we can now turn to the proof of Theorem \ref{finResing}.


\subsection{Proof of Theorem \ref{finResing}}\label{ssec:proof} 

We start by collecting the following facts, all of which will be used without further notice in the proof: 

Since $P_\lambda(\eps)$ is a finite dimensional projection, 
the results in the appendix applied to $\mathcal H^\eps(p)$ imply that $\widehat{W}_\a^{\eps,p}P_\lambda(\eps)$, and thus 
also $\H^\eps_\a P_\lambda(\eps)$, is bounded. Moreover, the arguments showing boundedness of $Q_j^\a(\eps)$ also yield boundedness of $\H^\eps_\a Q^\a_j(\eps)$, 
which in itself implies that $\H^\eps_\a T^\a_N(\eps)$ is bounded. In turn this yields boundedness of $\H^\eps_\a P^\a_N(\eps)$ by invoking formula \eqref{PT}. 
Furthermore, since $\H^\eps_\a$, $P^\a_N(\eps)$, and $P_\lambda(\eps)$ are all self-adjoint, we have that
$$\| \H^\eps_\a P_\lambda(\eps)\| = \| P_\lambda(\eps) \H^\eps_\a\| \quad \| \H^\eps_\a P^\a_N(\eps) \| = \| P^\a_N(\eps) \H^\eps_\a\|.$$ 

With this in mind, our goal is to derive an asymptotic estimate for the operator norm of
\[
\Delta(\eps,t) := \Big(e^{-i \eps^{-2} t \H_\a^\eps} - e^{-i  \eps^{-2} t \H^\eps} \Big)P_\lambda.
\]

To do so, we start by using the triangle inequality, together with the fact that both of the Schr\"odinger groups are unitary, to obtain
\begin{align*}
\Big \| \Delta(\eps,t) - \Big ( e^{-i \eps^{-2}  t \H_\a^\eps} P_N^\a(\eps)- e^{-i  \eps^{-2} t \H^\eps} P_\lambda (\eps)\Big )  \Big \| \le \| P_N^\a(\eps) - P_\lambda \| + \| P_\lambda(\eps) - P_\lambda \|.
\end{align*}
By \eqref{Petabdd}, we have
\[
\| P_\lambda(\eps) - P_\lambda \| \lesssim \eps,
\]
while \eqref{TN} and \eqref{PTeta} imply
\begin{equation}\label{Pdiffest}
\| P_N^\a(\eps) - P_\lambda \| \lesssim \| P_N^\a(\eps) - T_N^\a (\eps) \| + \| T_N^\a(\eps) - P_\lambda \| \lesssim \eps^{N+1}+\eps\lesssim \eps.
\end{equation}
Thus
\[
\Big \| \Delta(\eps,t) - \Big ( e^{-i \eps^{-2}  t \H_\a^\eps} P_N^\a(\eps)- e^{-i  \eps^{-2} t \H^\eps} P_\lambda (\eps)\Big )  \Big \|  \lesssim  \eps,
\]
and since $[\mathcal H^\eps, P_\lambda(\eps) ] =0$, we can rewrite
\begin{align*}
e^{-it \eps^{-2} \H^\eps} P_\lambda(\eps) = e^{-it \eps^{-2} P_\lambda(\eps) \H^\eps P_\lambda(\eps)} P_\lambda(\eps),
\end{align*}
which yields
\begin{align}\label{estt1}
\Big \| \Delta(\eps,t) - \Big ( e^{-i \eps^{-2} t \H_\a^\eps} P^\a_N(\eps) - e^{-it \eps^{-2} P_\lambda(\eps) \H^\eps P_\lambda(\eps)} P_\lambda(\eps)\Big )  \Big \| \lesssim \eps .
\end{align}

To further estimate the difference of the two Schr\"odinger groups involved, we recall the following general fact: 
Let $A$ and $B$ be two self-adjoint operators, such that $A-B$ is bounded, and denote
\[
W_s : = e^{i s A} e^{-i s B}, \quad \text{for $s\in \R$.}
\]
Then $W_0=1$ and 
\[
W_s = 1 + \int_0^s \dot W_\tau \, d\tau . 
\]
Writing out the time-derivative $\dot W_\tau$ and multiplying by $e^{-i s A}$, yields Duhamel's formula (or, equivalently, Dyson's formula in the interaction picture) for the 
difference of two Schr\"odinger groups, i.e.
\begin{align}\label{duham}
e^{-is A }- e^{-is B}  = -i \int_0^s e^{-i(s-\tau) A} \big( A - B \big ) e^{-i \tau B} e^{-i \tau A}  \,d\tau.
\end{align}

Next, we decompose $\H_\a^\eps $ into its diagonal and off-diagonal elements, i.e.
\begin{align*}
\H_\a^\eps = & \, P^\a_N(\eps) \H_\a^\eps P^\a_N(\eps) + (1-P^\a_N(\eps)) \H_\a^\eps (1-P^\a_N(\eps)) +\\
& \, + (1-P^\a_N(\eps)) \H_\a^\eps P^\a_N(\eps) + P^\a_N(\eps) \H_\a^\eps (1-P^\a_N(\eps)) \\
: = & \,  \H_{\a, \rm diag} ^\eps + \H_{\a, \rm off} ^\eps.
\end{align*}
In view of \eqref{aiP}, we have
\begin{equation*}
\| \H_{\a, \rm off} ^\eps \| = \big \| (1- 2 P^\a_N(\eps)) [\H^\eps _\a, P^\a_N(\eps )] \big\|\lesssim \eps ^{N+1}.
\end{equation*}
Duhamel's formula \eqref{duham} for $s= \frac{t}{\eps^2}$, $A = \H_\a^\eps$, and $B= \H_{\a, \rm diag} ^\eps$, consequently implies
\begin{align*}
\Big \|  \Big ( e^{-i \eps^{-2} t \H_\a^\eps} - e^{-i \eps^{-2} t \H_{\a, \rm diag}^\eps} \Big ) P^\a_N(\eps) \Big \| \le \int_0^{t \eps^{-2} } \| \H_{\a, \rm off} ^\eps \| \, d\tau \lesssim \eps^{N-1} |t|.
\end{align*}

To proceed further, we note that 
\[
\H_{\a, \rm diag}^\eps \Big|_{\text{\rm ran} P^\a_N(\eps)} = P^\a_N(\eps) \H_\a^\eps P^\a_N(\eps) \Big|_{\text{\rm ran} P^\a_N(\eps)} .
\]
The functional calculus of self-adjoint operators therefore yields
\begin{equation*}
 e^{-i \eps^{-2}t \H_{\a, \rm diag}^\eps} P^\a_N(\eps) = e^{-i \eps^{-2} t P^\a_N(\eps) \H_{\a}^\eps P^\a_N(\eps)} P^\a_N(\eps),
\end{equation*}
and hence
\begin{equation}\label{estt2}
\Big \| \Big (e^{-i \eps^{-2} t \H_\a^\eps} - e^{-it \eps^{-2} P^\a_N(\eps) \H_\a^\eps P^\a_N(\eps)} \Big )P^\a_N(\eps) \Big \| \lesssim \eps^{N-1} |t|.
\end{equation}
Another application of the triangle inequality allows us to combine this estimate with \eqref{estt1} to infer
\[
\Big \| \Delta(\eps,t) -  \Big (  e^{-it \eps^{-2} P^\a_N(\eps) \H_\a^\eps P^\a_N(\eps)}P^\a_N(\eps) 
- e^{-it \eps^{-2} P_\lambda(\eps) \H^\eps P_\lambda(\eps)} P_\lambda(\eps) \Big ) \Big \| \lesssim \eps + \eps^{N -1}|t|.
\]
In here, we can further rewrite $$P_\lambda(\eps) = P_\lambda(\eps) -P^\a_N(\eps) + P^\a_N(\eps),$$ and note that
\begin{align*}
P^\a_N(\eps) - P_\lambda(\eps) = P^\a_N(\eps) - T^\a_N(\eps) + \sum_{j=1}^N \eps^j (Q_j^\a(\eps) - Q_j(\eps)) - \sum_{j=N+1}^\infty \eps^j Q_j(\eps).
\end{align*}
Recalling \eqref{PTeta} and \eqref{Qest}, we see that
\begin{equation}\label{Pest3}
\| P^\a_N(\eps) - P_\lambda(\eps) \| \lesssim \eps^{N+1} + \eps^{\beta-\alpha},
\end{equation}
which consequently yields
\[
\Big \| \Delta(\eps,t) -  \Big (  e^{-it \eps^{-2} P^\a_N(\eps) \H_\a^\eps P^\a_N(\eps)} - e^{-it \eps^{-2} P_\lambda(\eps) \H^\eps P_\lambda(\eps)} \Big )P^\a_N(\eps) \Big \| \lesssim \eps 
+  \eps^{\beta-\alpha} + \eps^{N -1}|t|.
\]

Finally, we can use formula \eqref{duham} once more to estimate
\begin{align*}
&\, \Big \| \Big (  e^{-it \eps^{-2} P^\a_N(\eps) \H_\a^\eps P^\a_N(\eps)} - e^{-it \eps^{-2} P_\lambda(\eps) \H^\eps P_\lambda(\eps)} \Big )P^\a_N(\eps) \Big \|  \\
& \ \le \int_0^{t \eps^{-2} } \| P^\a_N(\eps) \H_\a^\eps P^\a_N(\eps)  - P_\lambda(\eps) \H^\eps P_\lambda(\eps)\| \, d\tau .
\end{align*}
Here, we can express the difference of operators appearing within the integral via
\begin{align*}
&\, P^\a_N(\eps) \H_\a^\eps P^\a_N(\eps)  - P_\lambda(\eps) \H^\eps P_\lambda(\eps)  \\
& \, = P^\a_N(\eps) \H_\a^\eps P^\a_N(\eps)  - P_\lambda(\eps) \H_\a^\eps P_\lambda(\eps) + \eps^{\beta - \alpha} P_\lambda(\eps) \widehat{W}_\a^{\eps,p} P_\lambda(\eps)\\
&\, = \big(P^\a_N(\eps) - P_\lambda(\eps)\big) \H_\a^\eps P^\a_N(\eps)  + P_\lambda(\eps) \H_\a^\eps \big (P^\a_N(\eps)-P_\lambda(\eps) \big) + \eps^{\beta - \alpha} P_\lambda(\eps) \widehat{W}_\a^{\eps,p} P_\lambda(\eps).
\end{align*}
Estimate \eqref{Pest3}, together with our previous discussion on the boundedness of all the appearing operators then yields
\[
\| P^\a_N(\eps) \H_\a^\eps P^\a_N(\eps)  - P_\lambda(\eps) \H^\eps P_\lambda(\eps)\| \lesssim \eps^{N+1} + \eps^{\beta - \alpha},
\]
and thus
\[
\int_0^{t \eps^{-2} } \| P^\a_N(\eps) \H_\a^\eps P^\a_N(\eps)  - P_\lambda(\eps) \H^\eps P_\lambda(\eps)\| \, d\tau \lesssim \eps^{N-1} |t|+ \eps^{\beta - \alpha -2}|t|.
\]
Recalling that $\beta -\alpha \ge 3$ we infer that
\[
\Big \| \Big (  e^{-it \eps^{-2} P^\a_N(\eps) \H_\a^\eps P^\a_N(\eps)} - e^{-it \eps^{-2} P_\lambda(\eps) \H^\eps P_\lambda(\eps)} \Big )P^\a_N(\eps) \Big \|  \lesssim \eps |t|, \quad \text{for $N\ge 2$.}
\]
Hence, by choosing some fixed $N\ge 2$:
\begin{align*}
\| \Delta (\eps, t)\| \le & \,  \Big \| \Delta(\eps,t) -  \Big (  e^{-it \eps^{-2} P^\a_N(\eps) \H_\a^\eps P^\a_N(\eps)} - e^{-it \eps^{-2} P_\lambda(\eps) \H^\eps P_\lambda(\eps)} \Big )P^\a_N(\eps) \Big \| \\ 
&\, + \Big \| \Big (  e^{-it \eps^{-2} P^\a_N(\eps) \H_\a^\eps P^\a_N(\eps)} - e^{-it \eps^{-2} P_\lambda(\eps) \H^\eps P_\lambda(\eps)} \Big )P^\a_N(\eps) \Big \|\lesssim  \eps (1+|t|),
\end{align*}
which is the desired result. \hfill $\Box$
\medskip

\begin{remark}
It is clear from our proof that the condition that $\a$ vanishes in a neighborhood of the origin can be relaxed to $\a$ being sufficiently small there. More precisely, Theorem \ref{finResing} 
remains valid if
\begin{equation*}
\a(x) \lesssim  
\begin{cases} |x|^{3+\alpha} \quad \text{for $|x| < |x_0|$,} \\
|x|^\beta \quad \quad \text{for $|x|\ge |x_0|$.} \end{cases}
\end{equation*}
The details are left to the interested reader.
\end{remark}

\bigskip

\noindent{\bf Acknowledgements}

The authors want to thank the anonymous referee's for their helpful remarks. C. Sparber gratefully acknowledges financial support by the MPS Simons foundation through award no. 851720. 
The authors have no competing interests to declare that are relevant to the content of this article.

\bigskip


\appendix

\section{Spectral Estimates}\label{sec:specEst}

For the sake of the reader we shall present in here an elementary approach to the proof of exponential decay bounds for the eigenfunctions 
of a wide class of confinement Hamiltonians. Suppose
\[
	H = -\Delta + U(x), \quad \D(H) = C_0^\infty(\R^d),
\]
such that $U: \R^d \to \R$ satisfies
\begin{equation}\label{Vcon}
	U(x) \ge 0, \quad U \in L^\infty_{\text{loc}}(\R^d), \quad \lim_{R \to \infty}\left(\essinf_{x \in \R^d} \big(1-\mathbbm{1}_{B_R(0)}\big)U(x)\right) = \infty,
\end{equation}
where $\mathbbm{1}_{B_R}(x)$ is the indicator function on the ball of radius $R$. As in Lemma \ref{specInfo}, we have that $H$ is essentially self-adjoint with positive, discrete spectrum. Finally, let
\[
	\langle x \rangle := (1+|x|^2)^{1/2}
\]
and note that
\begin{equation}\label{bracket}
	|\nabla\langle x \rangle| \le 1, \quad |\Delta\langle x \rangle| \le d.
\end{equation}
Then, we have the following:
\begin{proposition}\label{ResProp}
Let $K \subset \rho(H)$ be compact, where $\rho(H)$ is the resolvent set of $H$. Then, there exists $\omega_K > 0$ and $M_K < \infty$ such that for all $z\in K$ and $0\le \omega \le \omega_K$, the resolvent satisfies
\begin{equation}\label{resLem1}
	\big\| e^{\omega\langle\, \cdot\, \rangle}(H-z)^{-1}e^{-\omega\langle\, \cdot\, \rangle}\big \| \le M_K.
\end{equation}
\end{proposition}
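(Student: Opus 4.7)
The plan is to carry out the Combes--Thomas conjugation trick: we compare $(H-z)^{-1}$ to the resolvent at $z$ of the conjugated operator
\[
H_\omega := e^{\omega\langle \cdot\rangle} H e^{-\omega\langle\cdot\rangle},
\]
which, since $[e^{-\omega\langle x\rangle}, V]=0$, reduces to conjugating the Laplacian. A direct calculation using $e^\varphi \nabla e^{-\varphi} = \nabla - \nabla\varphi$ with $\varphi = \omega\langle x\rangle$ yields, on $C_0^\infty(\R^d)$,
\[
H_\omega = H + A_\omega, \qquad A_\omega := 2\omega\,\nabla\langle x\rangle\cdot\nabla + \omega\,\Delta\langle x\rangle - \omega^2|\nabla\langle x\rangle|^2,
\]
where in view of \eqref{bracket} the zeroth order coefficient is bounded by $\omega d + \omega^2$.

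The key step is the bound $\|A_\omega (H-z)^{-1}\| \lesssim \omega$, uniformly for $z \in K$. Fix $f\in L^2(\R^d)$ and set $u = (H-z)^{-1}f$, so that $\|u\|\le C_K\|f\|$ and $\|Hu\|\le (1+|z|C_K)\|f\| \le C'_K\|f\|$ with $C_K := \sup_{z\in K}\|(H-z)^{-1}\|$. Since $V\ge 0$ by \eqref{Vcon}, we have $\|\nabla u\|^2 \le \langle u, Hu\rangle \le \|u\|\|Hu\| \le C_K C'_K \|f\|^2$. Using \eqref{bracket}, this yields
\[
\|A_\omega (H-z)^{-1}f\| \le 2\omega\,\|\nabla u\| + (\omega d + \omega^2)\|u\| \le C''_K\,\omega\,\|f\|,
\]
for all $z\in K$.

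Choose $\omega_K > 0$ small enough that $C''_K\omega_K \le 1/2$. Then for $0 \le \omega \le \omega_K$ and $z\in K$ the operator $1 + A_\omega(H-z)^{-1}$ is invertible on $L^2(\R^d)$ by Neumann series, with inverse bounded by $2$. Since $H_\omega - z = (1 + A_\omega(H-z)^{-1})(H-z)$ on $\mathcal D(H)$, we obtain
\[
(H_\omega - z)^{-1} = (H-z)^{-1}\bigl(1 + A_\omega(H-z)^{-1}\bigr)^{-1}, \qquad \|(H_\omega - z)^{-1}\| \le 2 C_K.
\]
To conclude, note that for any $f \in L^2$ with $e^{\omega\langle\cdot\rangle}f \in L^2$, the vector $u = (H-z)^{-1}f$ satisfies $e^{\omega\langle\cdot\rangle}u \in L^2$ and $(H_\omega-z)(e^{\omega\langle\cdot\rangle}u) = e^{\omega\langle\cdot\rangle}f$. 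This identifies $e^{\omega\langle\cdot\rangle}(H-z)^{-1}e^{-\omega\langle\cdot\rangle}$ with $(H_\omega-z)^{-1}$ on a dense subset, and hence \eqref{resLem1} follows with $M_K := 2C_K$.

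The main technical point is the uniformity of the bound $\|A_\omega(H-z)^{-1}\|\le C''_K\omega$ over the compact set $K$, which is achieved via the $V\ge 0$ assumption (yielding $\|\nabla\cdot\|^2 \le \langle\cdot,H\cdot\rangle$); the only mild subtlety is the domain identification at the end, which is handled by first working on $e^{-\omega\langle\cdot\rangle}L^2$ and extending by continuity.
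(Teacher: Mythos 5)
Your proof is correct and follows the same Combes--Thomas conjugation strategy as the paper's: the factorization $e^{\omega\langle\cdot\rangle}(H-z)e^{-\omega\langle\cdot\rangle}=(1+A_\omega(H-z)^{-1})(H-z)$, the uniform bound $\|A_\omega(H-z)^{-1}\|\lesssim\omega$ on $K$ obtained from $V\ge 0$ and functional calculus, and a Neumann-series inversion. The intermediate estimate is slightly streamlined compared to the paper (you substitute $u=(H-z)^{-1}f$ from the outset and use $\|\nabla u\|^2\le\langle u,Hu\rangle$ directly, whereas the paper first records an abstract $H$-bound for the conjugation remainder $D_\omega$ with a free parameter $b$ and then substitutes), but this is cosmetic. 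One small caveat on the final step: the sentence claiming that $u=(H-z)^{-1}f$ satisfies $e^{\omega\langle\cdot\rangle}u\in L^2$ is circular as written, since that is exactly the content of the proposition. The identification should run the other way: for $g\in C_0^\infty$, set $v=(H_\omega-z)^{-1}g$, check that $e^{-\omega\langle\cdot\rangle}v\in\mathcal D(H)$ with $(H-z)(e^{-\omega\langle\cdot\rangle}v)=e^{-\omega\langle\cdot\rangle}g$, conclude $v=e^{\omega\langle\cdot\rangle}(H-z)^{-1}e^{-\omega\langle\cdot\rangle}g$, and extend by density. Your closing remark about ``first working on $e^{-\omega\langle\cdot\rangle}L^2$'' points to precisely this fix, so the issue is in the phrasing rather than the idea.
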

\begin{proof}
First note that for $f \in C_0^\infty(\R^d)$,
\[
	\Big(e^{\omega\langle\, \cdot\, \rangle}(H-z)^{-1}e^{-\omega\langle\, \cdot\, \rangle}\Big)\Big(e^{\omega\langle\, \cdot\, \rangle}(H-z)e^{-\omega\langle\, \cdot\, \rangle}\Big)f = f,
\]
and so
\begin{equation}\label{InvId}
	e^{\omega \langle\, \cdot \, \rangle}(H-z)^{-1}e^{-\omega\langle\, \cdot\, \rangle} = \Big(e^{\omega\langle\, \cdot\, \rangle}(H-z)e^{-\omega\langle\, \cdot\, \rangle}\Big)^{-1},
\end{equation}
provided $e^{\omega\langle\, \cdot\, \rangle}(H-z)e^{-\omega\langle\, \cdot\, \rangle}$ has a bounded inverse. To show that this is indeed the case, we first compute directly
\[
	e^{\omega\langle\, \cdot\, \rangle}He^{-\omega\langle\, \cdot\, \rangle} = H + \omega D_\omega,
\]
where
\[
	D_\omega := 2\nabla\langle\, \cdot \, \rangle\cdot\nabla + \Delta\langle\, \cdot \, \rangle  - \omega|\nabla \langle\, \cdot \, \rangle|^2.
\]
Then, we have the identity
\[
	e^{\omega\langle\, \cdot\, \rangle}(H-z)e^{-\omega\langle\, \cdot\, \rangle} = (H - z + \omega D_\omega) = \Big(1+\omega D_\omega(H-z)^{-1}\Big)(H-z).
\]
Hence, if we can bound
\begin{equation}\label{aDbound}
	\omega\Vert D_\omega(H-z)^{-1}\Vert \le c < 1
\end{equation}
then $1+\omega D_\omega(H-z)^{-1}$ is invertible and we obtain the desired bounded inverse for $z \in K$:
\begin{equation}\label{InvBound}
	\Big\Vert \Big(e^{\omega\langle\, \cdot\, \rangle}(H-z)e^{-\omega\langle\, \cdot\, \rangle}\Big)^{-1}\Big\Vert \le \frac{1}{1-c}\Vert(H-z)^{-1}\Vert.
\end{equation}

To prove \eqref{aDbound}, we must first bound $D_\omega$: By H\"older's inequality and \eqref{bracket}, we have
\begin{equation*}\begin{split}
	\Vert 2\nabla\langle\, \cdot\, \rangle\cdot\nabla f\Vert_{L^2}^2 & \, = \Big\Vert\sum_{j=1}^d 2\nabla_j\langle\, \cdot\, \rangle\nabla_j f\Big\Vert_{L^2}^2 \\
	& \, \le d\sum_{j=1}^d\Big\Vert 2\nabla_j\langle\, \cdot \, \rangle\nabla_j f\Big\Vert_{L^2}^2\le d\sum_{j=1}^d\Big\Vert 2\nabla_j f\Big\Vert_{L^2}^2.
\end{split}
\end{equation*}
Then, since $U(x) \geq 0$,
\begin{equation*}\begin{split}
	d\sum_{j=1}^d\Big\Vert 2\nabla_j f\Big\Vert_{L^2}^2 = 4d\langle f, -\Delta f \rangle_{L^2} \le 4d\langle f, H f \rangle_{L^2}.
\end{split}\end{equation*}
Finally, let $b \in (0,\infty)$. By Cauchy-Schwarz,
\begin{equation*}\begin{split}
	4d\langle f, H f \rangle_{L^2} \le 4d\Vert bf\Vert_{L^2} \, \Vert \tfrac{1}{b}Hf\Vert_{L^2} \le 2d\Big(b\Vert f\Vert_{L^2}+\tfrac{1}{b}\Vert Hf\Vert_{L^2}\Big)^2,
\end{split}\end{equation*}
and so we have shown
\begin{equation}\label{D1}
	\Vert 2\nabla\langle\, \cdot \, \rangle \cdot \nabla f\Vert_{L^2} \le \sqrt{2d}\Big(b\Vert f\Vert_{L^2}+\tfrac{1}{b}\Vert Hf\Vert_{L^2}\Big).
\end{equation}
The second and third term in $D_\omega$ are easily bounded via \eqref{bracket}, and so with \eqref{D1} we find
\begin{equation}\begin{split}\label{Dfin}
	\Vert D_\omega f\Vert_{L^2} &\le \sqrt{2d}\Big(b\Vert f\Vert_{L^2}+\tfrac{1}{b}\Vert Hf\Vert_{L^2}\Big) + (d+\omega)\Vert f\Vert_{L^2}\\
	&= \frac{1}{b}\sqrt{2d}\, \Vert Hf\Vert_{L^2} + \Big(b\sqrt{2d} + d + \omega\Big)\Vert f\Vert_{L^2}.
\end{split}\end{equation}

This inequality holds in general whenever $f \in \D(H)$, where we now denote $\D(H)$ to be the domain of the self-adjoint extension of $H$. 
Hence, we may take $g \in L^2(\R^d)$ and substitute $f = (H-z)^{-1}g$ into \eqref{Dfin}:
\begin{equation}\label{Dstep}
\begin{split}
	\Vert D_\omega (H-z)^{-1}g\Vert_{L^2} \le & \, \frac{1}{b}\sqrt{2d}\, \Vert H(H-z)^{-1}g\Vert_{L^2} \\ 
	& \, + \Big(b\sqrt{2d} + d + \omega\Big)\Vert (H-z)^{-1}g\Vert_{L^2}.
	\end{split}
\end{equation}
Next, recalling that $K \subset \rho(H)$ is compact, we denote
\[
	\delta_K := \inf_{z\in K} \text{dist}\Big(z,\sigma(H)\Big) > 0, \quad z_K := \sup_ {z\in K}|z| < \infty.
\]
By functional calculus, this gives
\begin{equation}\label{funcBound}
	\sup_{z\in K} \Vert(H-z)^{-1}\Vert \le \frac{1}{\delta_K},\quad \sup_{z\in K} \Vert H(H-z)^{-1}\Vert \le 1+ \frac{z_K}{\delta_K}.
\end{equation}
Returning to \eqref{Dstep}, we choose $b = \sqrt{\delta_k}$ to find
\begin{equation}\label{alphaEst}
	\Vert D_\omega (H-z)^{-1}g\Vert_{L^2} \le \bigg(\sqrt{\frac{2d}{\delta_k}}\bigg(1+ \frac{z_K}{\delta_K}\bigg) + \big(\sqrt{2d\delta_k} + d + \omega\big)\frac{1}{\delta_K}\bigg)\Vert g\Vert_{L^2}.
\end{equation}
Therefore, we can always find an $\omega_K > 0$, such that for all $0 \le \omega \le \omega_K$:
\[
	\omega\Vert D_\omega (H-z)^{-1}g\Vert_{L^2} \le \frac{1}{2}\Vert g\Vert_{L^2},
\]
proving \eqref{aDbound}, and in turn \eqref{InvBound} with $c = \frac{1}{2}$. 

Finally, we apply the estimates \eqref{InvBound} and \eqref{funcBound} within identity \eqref{InvId} to finish the proof:
\begin{equation*}\begin{split}
	\Big \Vert e^{\omega\langle\, \cdot\, \rangle}(H-z)^{-1}e^{-\omega\langle\, \cdot\, \rangle} \Big\Vert = \Big \Vert\Big(e^{\omega\langle\, \cdot\, \rangle}(H-z)e^{-\omega\langle\, \cdot\, \rangle}\Big)^{-1}\Big\Vert 
	\le \frac{2}{\delta_k} := M_K.
\end{split}
\end{equation*}\end{proof}

Next, we show that the eigenfunctions and spectral projections of $H$ decay at worst exponentially:
\begin{proposition}\label{EigenProp}
Let $\lambda \in \R$ be an $m$-degenerate eigenvalue of $H=-\Delta + U$, with associated 
orthonormalized eigenfunction $\chi_k$, $k= 1, \dots, m$. Then, for any $\omega_0 > 0$ there exists $M_{\lambda,\omega_0} < \infty$, such that
\begin{equation}\label{resLem2}
	\big \| e^{\omega_0\langle\, \cdot\, \rangle}\chi_k \big \|_{L^2(\R^d)} \le M_{\lambda,\omega_0}.
\end{equation}
Additionally, for the associated $m$-dimensional spectral projection $P_\lambda$ it holds:
\begin{equation}\label{resLem3}
	\big \| e^{\omega_0\langle\, \cdot\, \rangle}P_\lambda e^{\omega_0\langle\, \cdot\, \rangle}f\big \|_{L^2(\R^d)} \le mM_{\lambda,\omega_0}^2\Vert f\Vert_{L^2(\R^d)}.
\end{equation}
\end{proposition}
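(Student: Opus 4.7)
The key obstacle is that Proposition~\ref{ResProp} supplies a weighted resolvent bound only for $\omega \le \omega_K$, where $\omega_K$ is the fixed constant produced in its proof by the quadratic constraint $\omega\|D_\omega(H-z)^{-1}\| < 1$; it therefore cannot on its own produce \eqref{resLem2} for arbitrarily large $\omega_0$. My plan is to establish \eqref{resLem2} directly via an Agmon-type weighted energy estimate that exploits the confining hypothesis $V \to \infty$, and then to derive \eqref{resLem3} from \eqref{resLem2} by a one-line calculation using the finite-rank structure of $P_\lambda$.

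For \eqref{resLem2}, I would fix $\omega_0 > 0$ and introduce the truncated weight
\[
F_\tau(x) := \frac{\omega_0 \langle x \rangle}{1 + \tau \langle x \rangle},\qquad \tau \in (0, 1],
\]
which is bounded for every fixed $\tau$ and increases pointwise to $\omega_0 \langle x \rangle$ as $\tau \to 0_+$, while satisfying the $\tau$-uniform estimates $|\nabla F_\tau| \le \omega_0$ and $|\Delta F_\tau| \le \omega_0 d + 2\omega_0^2$. Testing the identity $(H - \lambda)\chi_\lambda = 0$ against $e^{2F_\tau}\chi_\lambda \in L^2$ and integrating by parts (which is legitimate because $e^{2F_\tau}$ is bounded) yields the Agmon identity
\[
\int_{\R^d} e^{2F_\tau}|\nabla\chi_\lambda|^2\,dx + \int_{\R^d} e^{2F_\tau}\chi_\lambda^2\bigl[V - \lambda - 2|\nabla F_\tau|^2 - \Delta F_\tau\bigr]\,dx = 0.
\]
By \eqref{Vcon}, I pick $R_{\omega_0}$ so large that the bracket exceeds $1$ on $\{|x| > R_{\omega_0}\}$ uniformly in $\tau \in (0,1]$; discarding the nonnegative gradient term, splitting the second integral at $R_{\omega_0}$, and controlling the remainder over $B_{R_{\omega_0}}$ by $\|V\|_{L^\infty(B_{R_{\omega_0}})}$, the bounded factor $e^{2\omega_0\sqrt{1+R_{\omega_0}^2}}$, and $\|\chi_\lambda\|_{L^2} = 1$ produces a $\tau$-free bound for $\int_{\R^d} e^{2F_\tau}\chi_\lambda^2\,dx$. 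Monotone convergence as $\tau \to 0_+$ then gives \eqref{resLem2}.

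Estimate \eqref{resLem3} then falls out of the finite-rank decomposition
\[
e^{\omega_0 \langle \cdot \rangle} P_\lambda e^{\omega_0 \langle \cdot \rangle} f = \sum_{j=1}^{m} \bigl\langle f,\,e^{\omega_0\langle \cdot \rangle}\chi_{\lambda,j}\bigr\rangle\, e^{\omega_0\langle \cdot \rangle}\chi_{\lambda,j},
\]
by a Cauchy--Schwarz in each summand together with \eqref{resLem2} applied to every $\chi_{\lambda,j}$. The only delicate step is the Agmon estimate: since $e^{\omega_0\langle\cdot\rangle}\chi_\lambda \in L^2$ is precisely the conclusion one is trying to reach, the truncation $F_\tau$ is the device that lets me operate with a finite weighted norm before knowing the limit is finite, and the hypothesis $V(x) \to \infty$ is exactly what makes the resulting bound uniform in $\tau$ and insensitive to how large $\omega_0$ is chosen.
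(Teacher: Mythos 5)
Your proof is correct, but it takes a genuinely different route from the paper's. The paper does \emph{not} abandon Proposition \ref{ResProp}: it introduces the auxiliary operator $H_\nu = -\Delta + (1-\mathbbm{1}_{A_\nu})W + (\lambda+\nu)\mathbbm{1}_{A_\nu}$, which agrees with $H$ outside the bounded classically allowed region $A_\nu=\{W<\lambda+\nu\}$ and satisfies $H_\nu\ge\lambda+\nu$, so that $\lambda\in\rho(H_\nu)$ with $\operatorname{dist}(\lambda,\sigma(H_\nu))\ge\nu$. Writing $\chi_\lambda=(H_\nu-\lambda)^{-1}(\lambda+\nu-W)\mathbbm{1}_{A_\nu}\chi_\lambda$ and observing from \eqref{alphaEst} that the admissible $\omega_K$ in Proposition \ref{ResProp} grows with the spectral gap $\delta_K\ge\nu$, the paper gets \eqref{resLem2} for any $\omega_0$ simply by taking $\nu$ large. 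So your opening diagnosis --- that the resolvent estimate cannot yield arbitrarily large decay rates --- is true only for $H$ itself, not for the shifted comparison operator; this is precisely the loophole the paper exploits. Your alternative is the classical Agmon weighted-energy argument with the truncated weight $F_\tau$, and it is sound: the identity you write is the correct integration-by-parts form (one should justify the integration by parts by approximating $\chi_\lambda$ in the graph norm by $C_0^\infty$ functions, using essential self-adjointness, rather than only invoking boundedness of $e^{2F_\tau}$, and your constant $\omega_0 d+2\omega_0^2$ for $|\Delta F_\tau|$ should be $\omega_0 d+2\omega_0$, though any $\tau$-uniform constant suffices). The trade-off: the paper's route recycles Proposition \ref{ResProp} and stays entirely at the level of operator-norm estimates, which is why the authors call it elementary; your route is self-contained, makes transparent why confinement yields every exponential rate, and gives the weighted gradient bound as a by-product. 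Your derivation of \eqref{resLem3} from \eqref{resLem2} via the rank-$m$ decomposition and Cauchy--Schwarz is identical to the paper's.
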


\begin{proof} 
Suppose $\lambda \in \sigma(H)$ is an eigenvalue of finite multiplicity $m\in \N$ and $\chi_k$ is an associated (normalized) eigenfunction. For $\nu > 0$, let
\[
\Omega_\nu := \{x \in \R^d \, : \,  U(x) < \lambda + \nu\} ,
\]
and define
\[
	H_\nu := -\Delta_x + (1-\mathbbm{1}_{\Omega_\nu})U + (\lambda+\nu)\mathbbm{1}_{\Omega_\nu},
\]
where $\mathbbm{1}_{\Omega_\nu}$ is the indicator function of $\Omega_\nu$. Then, we can rewrite $H\chi_k = \lambda \chi_k$ via
\[
	(H_\nu - \lambda)\chi_k = (\lambda + \nu - U)\mathbbm{1}_{\Omega_\nu}\chi_k.
\]
By construction, $H_\nu \geq \lambda + \nu > \lambda$, and so $\lambda \in \rho(H_\nu)$, the resolvent set of $H_\nu$. Hence,
\[
	\chi_k = (H_\nu - \lambda)^{-1}(\lambda + \nu - U)\mathbbm{1}_{\Omega_\nu}\chi_k
\]
and we can estimate
\begin{equation}\label{estchi}
\begin{split}
	\big \| e^{\omega_0\langle\, \cdot\, \rangle}\chi_k\big \|_{L^2} &\le \big \| e^{\omega_0\langle\, \cdot\, \rangle}(H_\nu - \lambda)^{-1}e^{-\omega_0\langle\, \cdot\, \rangle}\big \|\\
	&\quad\quad\times \big \| e^{\omega_0\langle\, \cdot\, \rangle}(\lambda + \nu - W)\mathbbm{1}_{\Omega_\nu}\big\|_{L^\infty}\, {\|\chi_k\|}_{L^2}.
\end{split}\end{equation}
For the term involving $(H_\nu - \lambda)^{-1}$, we seek to invoke Proposition \ref{ResProp}.
Note that the latter {\it a-priori} only holds for $\omega_0 \le \omega_K$. 
However from \eqref{alphaEst} it is clear that we may take $\omega_K$ to be arbitrarily large 
by increasing $\delta_K = \inf_{z\in K} \text{dist}\big(z,\sigma(H_\nu)\big)$, where $K \subset \rho(H_\nu)$ is compact. 
This can be done by setting $K = \{\lambda\}$ and noting that $\delta_K \geq \nu$. Hence we may choose $\nu$ large enough such that $\omega_0 \le \omega_K$, and Proposition \ref{ResProp} yields
\[
	\Big\Vert e^{\omega_0\langle\, \cdot\, \rangle}(H_\nu - \lambda)^{-1}e^{-\omega_0\langle\, \cdot\, \rangle} \Big\Vert \le M_K < \infty.
\]
Then, since $\Omega_\nu$ is bounded we have
\[
	{\Vert e^{\omega_0\langle\, \cdot\, \rangle}(\lambda + \nu - W)\mathbbm{1}_{\Omega_\nu}\Vert}_{L^\infty} = 
	\sup_{x \in \Omega_\nu}\big |e^{\omega_0\langle x\rangle}(\lambda + \nu - W(x))\big | \le C_{\lambda,\omega_0} < \infty.
\]
In view of \eqref{estchi}, this yields
\[
	\big \| e^{\omega_0\langle\, \cdot\, \rangle}\chi_k\big \|_{L^2} \le M_KC_{\lambda,\omega_0} \equiv M_{\lambda,\omega_0}.
\]
With \eqref{resLem2} established, we can now obtain an analogous bound for $P_\lambda$:
\begin{equation*}\begin{split}
	\big \|  e^{\omega_0\langle\, \cdot\, \rangle}P_\lambda e^{\omega_0\langle\, \cdot\, \rangle}f\big \|_{L^2} &\le \sum_{k=1}^m\big \| e^{\omega_0\langle\, \cdot\, \rangle}\langle e^{\omega_0\langle\, \cdot\, \rangle}f,\chi_k\rangle_{L^2} \chi_k\big \|_{L^2}\\
	&\le \sum_{k=1}^m M_{\lambda,\omega_0}\big |\langle f,e^{\omega_0\langle\, \cdot\, \rangle}\chi_k\rangle_{L^2}\big |\\
	&\le \sum_{k=1}^m M_{\lambda,\omega_0}\Vert f\Vert_{L^2(\R^d)}\, \big\| e^{\omega_0\langle\, \cdot\, \rangle}\chi_k\big\|_{L^2}\le mM_{\lambda,\omega_0}^2\Vert f\Vert_{L^2}.
\end{split}\end{equation*}

\end{proof}
\begin{remark} 
In general, one expects the eigenfunctions $\chi_k$ to decay even stronger, 
as can be seen from the well-known case of the harmonic oscillator where $\Theta\equiv 1$ and $\alpha =1$. 
However, the proof of such stronger decay properties is usually much more involved, while \eqref{resLem2} is sufficient for our purposes.
\end{remark}

\bibliographystyle{amsplain}

\end{document}